\title{Refining the grading of irreducible Lie colour algebra representations}
\author[M.~Ryan]{Mitchell Ryan}
\DeclareDocumentCommand\cbrak{ l m m }{\braces#1{\llbracket}{\rrbracket}{#2,#3}} 
\let\card\abs							
\newcommand{\CC}{\mathbb{C}}					
\newcommand{\cts}[1][]{\operatorname{C}^{#1}} 			
\DeclareMathOperator{\End}{End}					
\newcommand{\FF}{\mathbb{F}}					
\newcommand{\g}{\mathfrak{g}}					
\newcommand{\gl}{\mathfrak{gl}}					
\DeclareMathOperator{\im}{im}					
\newcommand{\ghirrep}{\mathsf{Irr}_{\Gamma/H}^{\textup{tw}}}	
\newcommand{\girrep}{\mathsf{Irr}_{\Gamma}^{H\textup{p}}}	
\newcommand{\iso}{\cong}					
\newcommand{\loopmodule}[2]{\mathrm{L}_{#2}\hspace{-1pt}{(#1)}}	
\newcommand{\Lp}[1]{L^{#1}}					
\newcommand{\Mat}[1][]{M_{#1}}					
\newcommand{\orthog}[1]{#1^{\perp}}				
\newcommand{\osp}{\mathfrak{osp}}				
\let\pdual\widehat 						
\newcommand{\RR}{\mathbb{R}}					
\newcommand{\sbrak}[2]{\cbrak{#1}{#2}^{\sigma}} 		
\newcommand{\spl}[1][]{\mathfrak{sl}_{#1}}			
\newcommand{\splc}[1][]{\spl[#1]^\mathrm{c}}			
\DeclareMathOperator{\spn}{span}				
\newcommand{\Uea}{U}						
\newcommand{\units}[1]{#1^{\times}}				
\newcommand{\ZZ}{\mathbb{Z}}					
\newcommand{\Ztwo}[1][]{\ZZ_2^{#1}}				
\newcommand{\Ztzt}{\Ztwo\times\Ztwo}				
\newcommand{\Hilb}{\mathscr{H}}					
\newcommand{\pdualtwist}{\pdual{\Gamma}/\orthog{H}}		
\newcommand{\dfn}[1]{{\color{red!70!black}\itshape #1}}
\theoremstyle{plain}
\newtheorem{thm}{Theorem}[section]
\newtheorem{lem}[thm]{Lemma}
\newtheorem{prop}[thm]{Proposition}
\theoremstyle{definition}
\newtheorem{defn}[thm]{Definition}
\theoremstyle{remark}
\newtheorem{rmk}[thm]{Remark}
\crefname{ex}{example}{examples}
\Crefname{ex}{Example}{Examples}
\crefname{thm}{theorem}{theorems}
\Crefname{thm}{Theorem}{Theorems}
\crefname{lem}{lemma}{lemmas}
\Crefname{lem}{Lemma}{Lemmas}
\begin{document}
\address{School of Mathematics and Physics, University of Queensland, St.\ Lucia, QLD 4072, Australia\\
ORCID: 0009-0006-2038-4410}
\email{\href{mailto:mitchell.ryan@uq.edu.au}{mitchell.ryan@uq.edu.au}}

\keywords{Color Lie (super)algebras, Graded Lie (super)algebras, irreducible representations}
\subjclass[2020]{17B75, 17B70, 17B10}

\begin{abstract}
	We apply the loop module construction of Mazorchuk--Zhao in the context of Lie colour algebras.
	We construct a bijection between the equivalence classes of all finite-dimensional graded irreducible Lie colour algebra representations from the irreducible representations for Lie superalgebras.
	This bijection is obtained by applying the loop module construction iteratively to simple groups in the Jordan--H\"older decomposition of the grading group.
	Restricting to simple groups in this way greatly simplifies the construction.
	Despite the bijection between Lie colour algebra representations and Lie superalgebra representations, Lie colour algebras maintain a non-trivial representation theory distinct from that of Lie superalgebras.
	We demonstrate the applicability of the loop module construction to Lie colour algebras in two examples:
	a Hilbert space for a quantum mechanical model and representations of a colour version of \( \mathfrak{sl}_2 \).
\end{abstract}

\maketitle
\section{Introduction}
Lie colour algebras are a generalisation of Lie superalgebras to grading by any abelian group \( \Gamma \) (rather than just \( \Ztwo \))
and were introduced by Rittenberg and Wyler~\cite{RW1978a,RW1978b} in 1978 (however, see also~\cite{Ree1960}).
Lie colour algebras initially found applications in (for example): de Sitter spaces~\cite{LR1978,Vasiliev1985,Tolstoy2014a}, quasispin~\cite{JYW1987}, strings~\cite{Zheltukhin1987} and extensions of Poincar\'e algebras~\cite{WillsToro1995,WillsToro2001}.
There has been a recent increase in activity surrounding applications of \( \Ztzt \)-graded Lie colour algebras.
This recent activity has mainly been focussed on parastatistics~\cite{YJ2001,JYL2001,KHA2011a,KHA2011b,Tolstoy2014b,SVdJ2018,Toppan2021a,Toppan2021b,Zhang2023,SVdJ2024} and \( \Ztzt \)-graded quantum mechanics~\cite{AKTT2016,AKTT2017,BD2020a,AKT2020,AKT2021,AAD2020a,AAD2020b,DA2021,AAD2021,Bruce2021,Quesne2021,AIKT2023}.

Many Lie colour algebra applications make use of irreducible representations to construct various models.
A classification of such representations would therefore be useful.
Indeed, in~\cite{AD2022}, the authors conclude their paper with a desire for a classification of all irreducible representations for the \( \Ztzt \)-supersymmetry algebra with \( \mathcal{N}\geq 2 \).
Such a classification motivates the results of this paper.

Various classifications of the algebras themselves under different assumptions have already been performed~\cite{Silvestrov1997,KT2021,BP2009,LuTan2023,SVdJ2023}.
Moreover, the representation theories of the colour versions of the following algebras have been examined: \( \gl \), \( \spl \), \( \osp \)~\cite{GJ1983,AA2021}, the Lie algebra for the group of plane motions~\cite{Silvestrov1996}, the Heisenberg Lie algebra~\cite{SS2006}, super Schr\"odinger algebras~\cite{AS2017} and a supersymmetry algebra~\cite{AD2022}.

One of the most powerful techniques in the study of Lie colour algebras is that of discolouration,
which provides a bijection between Lie colour algebras and \( \Gamma \)-graded Lie superalgebras.
This bijection preserves much of the important structure, such as subalgebras, ideals and subrepresentations.
Discolouration, in the specific case of \( \Ztwo[n] \)-gradings, was introduced concurrently with the algebras themselves~\cite{RW1978a}
and was soon generalised~\cite{Scheunert1979} to grading by any finitely generated abelian group.

Discolouration was reformulated in~\cite{Kleeman1985}, using so-called Klein operators lying outside the universal enveloping algebra.
It has since been shown that these Klein operators can be expressed in an algebraic extension of the universal enveloping algebras for a colour version of \( \gl \)~\cite{MB1997} or for the non-coloured supersymmetric \( \gl(m|n) \)~\cite{ISVdJ2020} and that Klein operators appear in the \( q\to -1 \) limit of \( U_q(\spl[3]) \)~\cite{AI2024}.
This version of discolouration has already been applied to the study of the representation theory.
For example, in~\cite{ISVdJ2020} it is shown how the covariant representations of \( \gl(m|n) \) can be lifted to the colour version of \( \gl \).

Despite widespread awareness of the results of~\cite{Scheunert1979},
discolouration has rarely been utilised in the recent literature surrounding applications of \( \Ztzt \)-graded Lie colour algebras.
Perhaps a partial explanation lies with the following nuance:
the discolouration of a Lie colour algebra representation is a \( \Gamma \)-graded representation,
whose representation theory can be quite different from the more familiar \( \Ztwo \)-graded representation theory of Lie superalgebras.
For example, a \( \Gamma \)-graded irreducible representation may become reducible when considering only the \( \Ztwo \)-grade.

Concurrently (yet largely independently) gradings on (non-colour) Lie algebras have been systematically studied,
beginning with~\cite{PZ1989}.
The classification of fine gradings on finite-dimensional simple Lie algebras over an algebraically closed field with characteristic zero has been completed---see the monograph~\cite{EK2013} and references therein, together with~\cite{DV2016,Yu2016,Elduque2016}.
Gradings over fields with nonzero characteristic (see e.g.~\cite{EK2013}) and over non-algebraically closed fields (see e.g.~\cite{EKRE2022}) are also under study.

Given that some Lie colour algebras discolour to give graded Lie algebras (i.e.\ graded Lie superalgebras with trivial odd sector),
the classification results of~\cite{EK2013,DV2016,Yu2016,Elduque2016} for gradings on Lie algebras are also useful for classifying Lie colour algebras.
Similarly, a classification of gradings on Lie superalgebras would aid in classifying Lie colour algebras (see~\cite{HDSK2019} and references therein for work in the direction of classifying Lie superalgebra gradings).
Application of the grading classification results to the problem of classifying Lie colour algebras has been studied in~\cite{BP2009}.

Although the research into graded algebras has mainly focussed on classifying the fine gradings,
the representation theory of graded algebras has also received attention.
Various works have shown how one can construct the graded irreducible modules from ungraded modules via different methods:
induced modules~\cite{EK2015},
thin coverings~\cite{BL2007} or loop modules~\cite{MZ2018}.

In this paper, we will focus on the loop module construction,
since loop modules are a natural structure to consider in the study of Lie colour algebras.
Indeed, loop constructions have already appeared in the Lie colour algebra literature,
often under a different name or left unnamed.
For example, the `enhancement' of a Lie colour algebra defined in~\cite{Price2024} is a twisted version of the loop algebra studied in~\cite{ABFP2008}.
Additionally, we will show (\Cref{sec:loopquantum}) that the \( \Ztzt \)-graded quantum mechanical Hilbert space realisation in~\cite{BD2020a} can be constructed as a loop module. 

The initial motivation for introducing loop modules was to study graded irreducible representations:
every \( \Gamma \)-graded irreducible representation is the loop module of some ungraded irreducible representation~\cite{MZ2018}.
Moreover, the loop module can be used to construct an equivalence between two categories (see~\cite[Theorem~4.16]{EK2017} for details)
and a bijection between equivalence classes of \( \Gamma \)-graded irreducible and \( \Gamma/H \)-graded irreducible modules~\cite[Remark~7.2]{EK2017}.
Although these results were initially proved in the context graded (non-colour) Lie algebras,
the results still hold for Lie colour algebras.
We make extensive use of these results throughout this paper.

The goal of this paper is twofold:
\begin{enumerate*}
	\item\label{item:applyloop}to apply the loop module construction to Lie colour algebras and demonstrate its utility in this context; and%
	\item\label{item:constructiveloop}to present a more constructive loop module procedure to simplify computations.
\end{enumerate*}

For goal~(\ref{item:applyloop}),
we have already mentioned that the loop module already appears in the Lie colour algebra literature through enhancements~\cite{Price2024} and a \( \Ztzt \)-graded quantum mechanical Hilbert space~\cite{BD2020a}.
In addition, we apply the loop module construction (from goal~(\ref{item:constructiveloop})) to a colour version of \( \spl[2] \) to derive all of its irreducible modules.
We compare these irreducible colour \( \spl[2] \) modules with those that have already appeared in the literature~\cite{CSVO2006,BS2022}.

The key idea for goal~\ref{item:constructiveloop}
is to take iterated loop modules, only increasing the size of the grading group by a finite simple group \( H \) at each step.
This simplifies the construction due to the following result:
for a finite-dimensional \( \Gamma/H \)-graded irreducible module \( V \), either \( V \) or its loop module (but not both) is a \( \Gamma \)-graded irreducible module.
If the desired grading group is finite (which is often the case in applications of Lie colour algebras)
we can use the Jordan--H\"older decomposition of the grading group to build up to the full grading group.

This iterated loop module process provides a method for deriving every finite-dimensional \( \Gamma \)-graded irreducible representation from the corresponding \( \Ztwo \)-graded representations.
In particular, we obtain bijections \( \mathcal{F}_1 \) and \( \mathcal{F}_2 \) that map between equivalence classes (with equivalence relations stronger than isomorphism) of irreducible representations, as per the following diagram.
\begin{center}
	\begin{tikzpicture}[xscale = 4, yscale = 1.5]
		\node (Z2super) at (-1,1) {\( \Ztwo \)-graded superalgebra irreps};
		\node (Gsuper) at (1,1) {\( \Gamma \)-graded superalgebra irreps};
		\node (Gcolour) at (1,-1) {\( \Gamma \)-graded colour algebra irreps};
		\node (uncolour) at (-1,-1) {ungraded colour algebra irreps};

		\draw [{<[scale=1.5]}-{>[scale=1.5]}] (Z2super) -- node[above]{\( \mathcal{F}_1 \)} (Gsuper);
		\draw [-{>[scale=1.5]}] (Gsuper.300) -- node[right]{recolouration} (Gcolour.60);
		\draw [{<[scale=1.5]}-] (Gsuper.240) -- node[left]{discolouration} (Gcolour.120);
		\draw [{<[scale=1.5]}-{>[scale=1.5]}] (uncolour) -- node[above]{\( \mathcal{F}_2 \)} (Gcolour);
	\end{tikzpicture}
\end{center}
These bijections are provided in~\Cref{thm:bijection}
(the existence of similar bijections were noted in~\cite[Remark~7.2]{EK2017}).
The \( \Ztwo \)-graded irreducible superalgebra representations have been well-studied,
and these bijections will allow us to carry across these results to Lie colour algebras.
It is important to note that these bijections are more complicated than discolouration and allow for interesting differences in the representation theories.

This paper is organised as follows.
In \Cref{sec:prelim}, we review some known results for Lie colour algebras, and results analogous to those for Lie (super)algebras.
We then summarise the loop module construction in \Cref{sec:loopmodule}.
As an example, \Cref{sec:loopquantum} uses the loop module to construct the Hilbert space of the \( \Ztzt \)-graded supersymmetric system of~\cite{BD2020a}.
In \Cref{sec:loopirrep}, we show how an iterated loop module construction can be used to derive the finite-dimensional \( \Gamma \)-graded irreducible modules from the \( \Gamma/H \)-graded modules.
We give an example of deriving the finite-dimensional irreducible modules of a colour version of \( \spl[2] \) (\Cref{sec:exampleslc2}) and compare with~\cite{CSVO2006,BS2022}.
We end with some concluding remarks in \Cref{sec:conclusion}.

\section{Lie colour algebras and their representations}\label{sec:prelim}
A Lie colour algebra generalises a Lie superalgebra to grading by some (additive) abelian group \( \Gamma \) (instead of just \( \Ztwo \)).
For the formal definition of a Lie colour algebra, we refer the reader to~\cite{Scheunert1979}.
As an illustrative example:
if \( A = \bigoplus_{\alpha\in\Gamma}A_{\alpha} \) is an associative algebra such that \( A_{\alpha}A_{\beta}\subseteq A_{\alpha+\beta} \) then \( A \) can be given the structure of a Lie colour algebra with bracket
	\[
		\cbrak{x}{y} = xy-\varepsilon(\alpha,\beta)yx, \qquad x\in A_{\alpha},\,y\in A_{\beta}.
	\]
Here, \( \varepsilon\colon\g\times\g\to\FF \) (where \( \FF \) is the field of scalars) is called a commutation factor~\cite{Scheunert1979} (or antisymmetric bicharacter~\cite{CSVO2006}, or phase function~\cite{MB1997}).
That this generalizes a Lie superalgebra is apparent:
a Lie superalgebra is simply a Lie colour algebra graded by the group \( \Gamma=\Ztwo \) with commutation factor \( \varepsilon(\alpha,\beta)=(-1)^{\alpha\cdot\beta} \).
Note that, in general, a commutation factor must satisfy certain conditions (see~\cite{Scheunert1979}).

We can define notions such as homomorphisms, subalgebras and representations similarly to superalgebras.
However, to retain information about the \( \Gamma \)-grading, we require some maps
(such as homomorphisms, intertwiners, and inclusion maps of subalgebras and \( \Gamma \)-graded submodules)
to be homogeneous of degree \( 0 \), as per the following definition.
\begin{defn}
	Let \( V=\bigoplus_{\gamma\in\Gamma} V_{\gamma} \) and \( W=\bigoplus_{\gamma\in\Gamma}W_{\gamma} \) be vector spaces over \( \FF \).
	A linear map \( f\colon V\to W \) is \dfn{homogeneous of degree \( \gamma\in\Gamma \)} if
	\( f(V_{\alpha}) \subseteq W_{\alpha+\gamma} \) for all \( \alpha\in\Gamma \).
	(In particular, \( f \) is a homogeneous element of the colour algebra \( \gl(V,\varepsilon) \), see~\cite{Scheunert1979}.)
\end{defn}


\subsection{Representations}
Although representations can be defined similarly to those for superalgebras, we make an important distinction between graded and ungraded representations.

\begin{defn}
	An \dfn{ungraded representation} of a Lie colour algebra \( \g \) on a vector space \( V \) is a linear map \( \rho\colon\g\to\End(V) \) satisfying \( \rho(\cbrak{x}{y}) = \rho(x)\rho(y)-\varepsilon(\alpha,\beta)\rho(y)\rho(x) \) for all \( x\in\g_{\alpha},y\in\g_{\beta} \).

	A \dfn{\( \Gamma \)-graded representation} of \( \g \) on a \( \Gamma \)-graded vector space \( V \) is 
	an ungraded representation that also satisfies \( \rho(\g_{\alpha})V_{\gamma}\subseteq V_{\alpha+\gamma} \).

	In either case, we say that \( V \) is a \dfn{\( \g \)-module given by \( \rho \)} (or simply \dfn{a \( \g \)-module}).
	We will often omit the representation map \( \rho \).
\end{defn}
Obviously, we do not require maps related to ungraded representations (such as intertwiners and submodules) to be homogeneous of degree \( 0 \).
But we do require this condition of the corresponding maps for graded representations.

We call a (\( \Gamma \)-graded) \( \g \)-module \dfn{(\( \Gamma \)-graded) irreducible} if it has no proper non-trivial (\( \Gamma \)-graded) submodules.
Note that there is an important distinction between the definitions of (ungraded) irreducible and \( \Gamma \)-graded irreducible: a module may be \( \Gamma \)-graded irreducible but not ungraded irreducible!

Recall that a \( \g \)-module is called completely reducible if it can be written as a direct sum of irreducible submodules.
Weyl's Theorem on complete reducibility will not hold for a general Lie colour algebra (indeed, Weyl's Theorem does not hold for Lie superalgebras~\cite{Kac1977}).
As such, we need to manually determine when a \( \g \)-module is completely reducible.
The following propositions will be of great help.

\begin{prop}\label{prop:sumsimpledirect}
	Consider a (\( \Gamma \)-graded or ungraded) \( \g \)-module \( V \) that can be written as \( V=\sum_{i\in I} V_i \) for irreducible \( \g \)-modules \( V_i \) (\( i\in I \)).
	Then, there exists some subset \( J \subseteq I \) such that \( V = \bigoplus_{i\in J} V_i \).
\end{prop}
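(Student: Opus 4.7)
The plan is to mimic the standard argument that a sum of simple modules admits a direct-sum subfamily, adapting it so that it covers the graded and ungraded cases uniformly. I would begin by introducing the collection
\[ \mathcal{S} = \{\, J \subseteq I : \text{the sum } \textstyle\sum_{j \in J} V_j \text{ is direct}\,\}, \]
partially ordered by inclusion. Directness of a sum can be checked on finite subfamilies (a sum is direct iff every finite subsum is direct), so the union of any chain in $\mathcal{S}$ again lies in $\mathcal{S}$. Zorn's lemma therefore produces a maximal element $J \in \mathcal{S}$, and the goal becomes to show that $W := \bigoplus_{j \in J} V_j$ equals $V$.

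For this, I would argue that $V_k \subseteq W$ for every $k \in I$; since $V = \sum_{i \in I} V_i$, this immediately forces $W = V$. Fix $k \in I$. Then $V_k \cap W$ is a $\g$-submodule of the irreducible $\g$-module $V_k$, so it is either $0$ or all of $V_k$. In the latter case $V_k \subseteq W$ directly. In the former case, $J \cup \{k\}$ again belongs to $\mathcal{S}$: any relation of the form $v_k + \sum_{j \in J'} v_j = 0$ with $J' \subseteq J$ finite puts $v_k$ into $V_k \cap W = 0$, after which directness on $J$ kills the remaining terms. Maximality of $J$ then forces $k \in J$, giving $V_k \subseteq W$ anyway.

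To handle the $\Gamma$-graded case uniformly with the ungraded one, the only thing to check is that the dichotomy ``a submodule of $V_k$ is $0$ or all of $V_k$'' is available in both settings. When each $V_i$ is $\Gamma$-graded irreducible, $W$ is a $\Gamma$-graded submodule of $V$ (being a sum of $\Gamma$-graded submodules), so $V_k \cap W$ is a $\Gamma$-graded submodule of the $\Gamma$-graded irreducible $V_k$, and the dichotomy still applies; everything else in the argument is insensitive to the grading.

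The main obstacle, or rather the step most easily taken for granted, is the verification that directness is preserved under unions of chains so that Zorn's lemma applies; once that is in place the proof is routine. I expect no further difficulty arising from the colour structure, since irreducibility alone supplies the required dichotomy and the commutation factor $\varepsilon$ plays no role in this argument.
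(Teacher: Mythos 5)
Your argument is correct and is essentially the proof the paper relies on: the paper defers to the standard module-theoretic fact (citing Lang, Ch.~XVII, Lemma~2.1), whose proof is exactly your Zorn's-lemma argument on families with direct sum, together with the observation that it transfers verbatim to the ($\Gamma$-graded) colour setting since only the irreducibility dichotomy is used. No gaps; the chain-union and maximality steps are handled properly.
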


\begin{prop}\label{prop:semisimplesubset}
	Submodules and quotient modules of a (\( \Gamma \)-graded or ungraded) completely reducible \( \g \)-module are completely reducible.
\end{prop}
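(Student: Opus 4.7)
My plan is to handle the quotient case directly and then reduce the submodule case to it via a complement argument. Write \( V = \sum_{i\in I} V_i \) as a sum of (\( \Gamma \)-graded or ungraded, as appropriate) irreducible submodules, which is possible by the assumption of complete reducibility.

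First I would deal with quotients. Let \( W \subseteq V \) be a submodule and let \( \pi\colon V\to V/W \) be the canonical projection (homogeneous of degree \( 0 \) in the graded case). Since \( V_i \cap W \) is a submodule of the irreducible \( V_i \), it equals either \( 0 \) or \( V_i \); hence each image \( \pi(V_i) \) is either zero or irreducible (and isomorphic to \( V_i \)). Therefore \( V/W = \sum_{i\in I} \pi(V_i) \) is a sum of irreducible submodules, and \Cref{prop:sumsimpledirect} converts this into a direct sum, establishing complete reducibility of \( V/W \).

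For submodules, I would apply Zorn's lemma to the collection of subsets \( J \subseteq I \) for which the sum \( W + \sum_{i\in J} V_i \) is direct, ordered by inclusion; the union of an ascending chain of such \( J \) lies in the collection because directness only constrains finite subsums. Pick \( J \) maximal and claim \( V = W \oplus \bigoplus_{i\in J} V_i \). If not, some \( V_k \) fails to be contained in \( W \oplus \bigoplus_{i\in J} V_i \); then \( V_k \cap \bigl(W \oplus \bigoplus_{i\in J} V_i\bigr) \) is a proper submodule of the irreducible \( V_k \) and so equals \( 0 \), which means \( J \cup \{k\} \) still lies in the collection, contradicting maximality. Consequently \( W \) is a direct summand of \( V \), and \( W \cong V \big/ \bigoplus_{i\in J} V_i \) is completely reducible by the quotient case already handled.

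The main obstacle is only notational: one must check that every construction stays inside the correct category. In the \( \Gamma \)-graded setting, \( \pi \) is degree-\( 0 \) homogeneous, the intersections and sums above are \( \Gamma \)-graded submodules, and \Cref{prop:sumsimpledirect} is invoked in its graded form, so both cases run uniformly.
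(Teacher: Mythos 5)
Your argument is correct and is precisely the standard module-theoretic proof that the paper itself defers to (it gives no proof of its own, citing \cite[Chapter~XVII]{Lang2002} and noting the argument carries over verbatim to colour algebra modules). Both the quotient case via images of the irreducible summands plus \Cref{prop:sumsimpledirect}, and the submodule case via a Zorn's-lemma complement reducing to the quotient case, are sound, and your remarks about degree-\( 0 \) homogeneity cover the graded setting.
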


These propositions are well-known facts from the module theory of rings (for the proofs in the case of rings, see e.g.~\cite[Chapter~XVII, Lemma~2.1 and Proposition~2.2]{Lang2002}).
The proofs of these propositions for Lie colour algebra modules are exactly the same as for modules over rings.

\subsection{Graded quotient modules}
Let \( V=\bigoplus_{\gamma\in\Gamma}V_{\gamma} \) be a \( \Gamma \)-graded \( \g \)-module given by the representation \( \rho \) and \( U=\bigoplus_{\gamma\in\Gamma}U_{\gamma} \) a \( \Gamma \)-graded submodule of \( V \) (so that \( U_{\gamma}\subseteq V_{\gamma} \)).

\begin{defn}
	The \dfn{graded quotient module of \( V \) by \( U \)},
	denoted \( V/U \), is a direct sum of quotient vector spaces \( \bigoplus_{\gamma\in\Gamma}V_{\gamma}/U_{\gamma} \) with \( \g \)-module structure given by \( \pi\colon \g\to\End(V/U) \)
	defined by
	\[
		\pi(x)(v+U_{\gamma}) = \rho(x)v+U_{\alpha+\gamma}
		\qquad \text{for}~x\in \g_{\alpha},\, v\in V_{\gamma}.
	\]
\end{defn}

The proof that the above representation is well defined is similar to that for Lie superalgebras.
The only additional property to check is whether \( \pi \) is homogeneous of degree \( 0 \), which is easily verified.

The three isomorphism theorems hold for \( \g \)-modules.
Again, the proofs are similar to superalgebras, needing only to verify that all maps are homogeneous of degree \( 0 \).

\subsection{Discolouration}

Scheunert~\cite{Scheunert1979} gave a bijection between Lie colour algebras and graded Lie superalgebras and between their representations.
This bijection has many nice properties; for example, it preserves subalgebras, ideals and subrepresentations.
To compute the image of a colour algebra under this bijection, we deform the bracket using a multiplier.

Given a Lie colour algebra \( \g \)
and a multiplier \( \sigma\colon\Gamma\times\Gamma\to\units{\FF} \)
(for the definition, see~\cite{Scheunert1979}; also called a multiplicative \( 2 \)-cocycle~\cite{BP2009}),
we can define a new bracket \( \sbrak{\cdot}{\cdot} \) by
\[
	\sbrak{x}{y} = \sigma(\alpha,\beta)\cbrak{x}{y}, \qquad x\in\g_{\alpha},y\in\g_{\beta}.
\]
Assuming that \( \Gamma \) is finitely generated,
we can choose this multiplier \( \sigma \) so that \( \g \) equipped with \( \sbrak{\cdot}{\cdot} \) forms a Lie superalgebra \cite{Scheunert1979}.
We denote this Lie superalgebra \( \g^\sigma \).
The map \( \g\mapsto\g^{\sigma} \)
is a bijection between the Lie colour algebras with commutation factor \( \varepsilon \)
and the \( \Gamma \)-graded Lie superalgebras~\cite{Scheunert1979}.
As such, we call \( \g^{\sigma} \) a \dfn{discolouration} of \( \g \).
Discolouration is an invertible procedure, and we will call \( (\g^{\sigma})^{1/\sigma} = \g \) the \dfn{recolouration} of \( \g^{\sigma} \).


Note that there is an alternative method of discolouration which works by adding new elements, called Klein operators, to the universal enveloping algebra (see~\cite{Kleeman1985,MB1997, ISVdJ2020, AI2024}).
However, for the remainder of this paper we will use the original discolouration process of~\cite{Scheunert1979} described in this section.

It is also possible to discolour representations.
Given a graded representation \( \rho\colon\g\to\gl(V,\varepsilon) \)
and a discolouring multiplier \( \sigma\colon\Gamma\times\Gamma\to\units{\FF} \),
we can define the discolouration
 of \( \rho \) to be the graded representation \( \rho^{\sigma} \) of \( \g^{\sigma} \) given by
\[
	\rho^{\sigma}(x)v = \sigma(\alpha,\gamma)\rho(x)v, \qquad \text{for}~x\in\g_{\alpha},v\in V_{\gamma}.
\]
Similarly, the map \( \rho\mapsto\rho^\sigma \) is a bijection between
the graded representations of \( \g \) 
and the graded representations of the corresponding discoloured \( \Gamma \)-graded Lie superalgebra~\cite{Scheunert1979}.
Thus, the representation theory of Lie colour algebras
can be completely derived from that of \( \Gamma \)-graded Lie superalgebras.

The \( \Ztwo \)-graded irreducible representations of Lie superalgebras have been well studied.
However, these results cannot be immediately applied to find \( \Gamma \)-graded irreducible representations.
For instance, it might not be possible to find a \( \Gamma \)-grading for a \( \Ztwo \)-graded representation.
On the other hand,
there might be a \( \Gamma \)-graded irreducible representation
that becomes reducible when considered as a \( \Ztwo \)-graded representation.

\section{loop module}\label{sec:loopmodule}
Throughout the remainder of this paper,
we will assume that \( \Gamma \) is a finite additive abelian group
and that the ambient field \( \FF \) contains a primitive root of unity of order \( \card{\Gamma} \)
(for instance, \( \FF=\CC \) will do).

Let \( \g = \bigoplus_{\gamma\in\Gamma}\g_{\gamma} \) be a \( \Gamma \)-graded Lie colour algebra with commutation factor \( \varepsilon \) and let \( H \) be a subgroup of the (additive) abelian group \( \Gamma \).
Notice that \( \g \) has a natural \( \Gamma/H \) grading, given by
\( \g=\bigoplus_{\Lambda\in\Gamma/H}\g^{\Gamma/H}_{\Lambda} \) where \( \g^{\Gamma/H}_{\Lambda}=\bigoplus_{\alpha\in\Lambda}\g_{\alpha} \).
Recall~\cite{Scheunert1979} that \( \g \) also has a natural \( \Ztwo \)-grading
\[
	\g^0 = \bigoplus_{\gamma\in\Gamma_0}\g_{\gamma}, \qquad \g^1 = \bigoplus_{\gamma\in\Gamma_1}\g_{\gamma}.
\]
where \( \Gamma_0 = \{\gamma\in\Gamma \mid \varepsilon(\gamma,\gamma) = 1\} \) and \( \Gamma_1 = \Gamma\setminus \Gamma_0 \).
If we choose \( H\leq \Gamma_0 \), then the \( \Gamma/H \)-grading refines the \( \Ztwo \)-grading,
i.e.\
\[
	\g^0 = \bigoplus_{\Lambda\in\Gamma_0/H}\g^{\Gamma/H}_{\Lambda}, \qquad \g^1 = \bigoplus_{\Lambda\in\Gamma_1/H}\g^{\Gamma/H}_{\Lambda}.
\]
This is useful if we want to prevent \( H \) from interfering with discolouration.

\begin{defn}[Mazorchuk--Zhao~\cite{MZ2018}; see also~\cite{ABFP2008}]\label{dfn:loopmodule}
	Let \( V=\bigoplus_{\Lambda\in\Gamma/H}V_{\Lambda} \) be a \( \Gamma/H \)-graded \( \g \)-module
	given by \( \rho \).
	Recall that the group algebra \( \FF\Gamma \) is the algebra with basis \( \{e_{\gamma}\mid \gamma\in\Gamma\} \) 
	and algebra product defined by \( e_{\alpha}\cdot e_{\beta} = e_{\alpha+\beta} \).
	The vector space \( V\otimes\FF\Gamma \) can be given a \( \g \)-module structure by
	\[
		x(v\otimes e_{\beta}) = xv\otimes e_{\alpha+\beta} 
	\]
	for~\(x\in\g_{\alpha}\) and \( v\otimes e_{\beta}\in V\otimes \FF\Gamma \).
	The \dfn{Loop module of \( V \) by \( H \)} is the \( \Gamma \)-graded \( \g \)-module
	\[
		\loopmodule{V}{H} = \bigoplus_{\gamma\in\Gamma} V_{\gamma + H} \otimes e_{\gamma} \subset V\otimes \FF\Gamma.
	\]
	The \( \gamma \)-sector of \( \loopmodule{V}{H} \) is \( \loopmodule{V}{H}_{\gamma} = V_{\gamma + H}\otimes e_\gamma \).
\end{defn}

Not every \( \Gamma/H \)-graded module \( V \) can be given a \( \Gamma \)-grading.
Intuitively,
the loop module adds in multiple copies of \( V \) to ensure that there is enough `space' for a refined \( \Gamma \)-grading.
If \( V \) can be given a \( \Gamma \)-grading, then the loop module contains \( V \):


\begin{prop}\label{prop:gradesubrep}
	If \( V \) can be given a \( \Gamma \)-grading that refines the \( \Gamma/H \)-grading,
	then \( V \) is a \( \Gamma \)-graded submodule of \( \loopmodule{V}{H} \).
\end{prop}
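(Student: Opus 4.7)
The plan is to exhibit an explicit $\Gamma$-graded embedding of $V$ into $\loopmodule{V}{H}$. Let the refining $\Gamma$-grading be $V = \bigoplus_{\gamma \in \Gamma} V'_{\gamma}$, so that for each coset $\Lambda \in \Gamma/H$ we have $V_{\Lambda} = \bigoplus_{\gamma \in \Lambda} V'_{\gamma}$. The candidate embedding is the linear map $\phi\colon V \to V \otimes \FF\Gamma$ determined on homogeneous elements by
\[
    \phi(v) = v \otimes e_{\gamma}, \qquad v \in V'_{\gamma},
\]
and extended by linearity.

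First I would check that $\phi$ lands inside $\loopmodule{V}{H}$. If $v \in V'_{\gamma}$, then $\gamma \in \gamma + H$, so $v \in V'_{\gamma} \subseteq V_{\gamma + H}$, giving $\phi(v) \in V_{\gamma+H} \otimes e_{\gamma} = \loopmodule{V}{H}_{\gamma}$. This simultaneously shows that $\phi$ is homogeneous of degree $0$ with respect to the $\Gamma$-gradings on $V$ and $\loopmodule{V}{H}$.

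Next I would verify that $\phi$ is $\g$-equivariant. Since the $\Gamma$-grading refines the $\Gamma/H$-grading, the underlying $\g$-module structure is the same, and the $\Gamma$-grading satisfies $\g_{\alpha} V'_{\gamma} \subseteq V'_{\alpha + \gamma}$. Hence, for $x \in \g_{\alpha}$ and $v \in V'_{\gamma}$,
\[
    \phi(x v) = xv \otimes e_{\alpha + \gamma} = x(v \otimes e_{\gamma}) = x\,\phi(v),
\]
using the definition of the $\g$-action on $V \otimes \FF\Gamma$ from \Cref{dfn:loopmodule}. Injectivity follows because the elements $\{e_{\gamma}\}_{\gamma \in \Gamma}$ are linearly independent in $\FF\Gamma$: writing a general element of $V$ as a finite sum $\sum_i v_i$ with $v_i \in V'_{\gamma_i}$ for distinct $\gamma_i$, the image $\sum_i v_i \otimes e_{\gamma_i}$ vanishes only if every $v_i = 0$.

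Combining these, $\phi$ is an injective $\g$-module homomorphism that is homogeneous of degree $0$, so its image is a $\Gamma$-graded submodule of $\loopmodule{V}{H}$ isomorphic to $V$ as a $\Gamma$-graded $\g$-module. I do not anticipate a genuine obstacle here; the only subtlety is being careful that the refinement hypothesis is what guarantees the action on $V'_{\gamma}$ is already compatible with the $\Gamma$-grading (rather than just the coarser $\Gamma/H$-grading), which is precisely what makes $\phi$ equivariant.
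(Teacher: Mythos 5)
Your proof is correct and is essentially the same as the paper's: both define the embedding $v \mapsto v \otimes e_{\gamma}$ for $v$ in the $\gamma$-component of the refining grading and verify it is an injective, degree-$0$ intertwiner. You simply spell out a few details (the image landing in $\loopmodule{V}{H}$, injectivity via linear independence of the $e_{\gamma}$) that the paper leaves implicit.
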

\begin{proof}
	Let \( V = \bigoplus_{\gamma\in\Gamma}W_{\gamma} \) be the \( \Gamma \)-grading,
	with \( V_{\Lambda} = \bigoplus_{\gamma\in\Lambda}W_{\gamma} \) for all \( \Lambda\in\Gamma/H \)
	(i.e.\ the \( \Gamma \)-grading refines the \( \Gamma/H \) grading).
	Define a map \( \Phi\colon V\to \loopmodule{V}{H} \) by \(\Phi(w) = w\otimes e_{\gamma} \) for \( w\in W_{\gamma} \)
	and extending to inhomogeneous elements by linearity.
	Linearity and injectivity of \( \Phi \) follow immediately.
	The map \( \Phi \) is homogeneous of degree \( 0 \) by definition.
	And \( \Phi \) is an intertwiner:
	for \( w\in W_{\gamma} \) and \( x\in\g_{\alpha} \),
	\[
		\Phi(xw) 
		= xw\otimes e_{\alpha+\gamma}
		= x(w\otimes e_{\gamma})
		= x(\Phi(w))
		\qedhere
	\]
\end{proof}

As well as a \( \Gamma \)-grading,
\( \loopmodule{V}{H} \) has a natural \( \Gamma/H \)-grading
given by
\[
	\loopmodule{V}{H} = \bigoplus_{\Lambda\in\Gamma/H} \left(\bigoplus_{\gamma\in\Lambda} V_{\Lambda}\otimes e_{\gamma}\right).
\]

\begin{rmk}\label{rmk:enhancement}
	If \( \Gamma = K \times H \) for some group \( K \),
	then we could equivalently define the loop module 
	to be \( V \otimes \FF[H] \)
	with representation given by \( x(v\otimes e_h) = xv\otimes (e_{h + \eta}) \)
	for \( x\in\g_{(\kappa,\eta)} \) and \( v\otimes e_h\in V\otimes\FF[H] \).
	Compare with~\cite{Price2024}, where this construction, together with a twist by a multiplier \( \sigma \) on \( \FF[H] \), is called an \dfn{enhancement}.
	The enhancement approach is used in~\cite{BP2009} and~\cite{Price2024} to study the structure of simple Lie colour algebras themselves (rather than their modules).
	An enhancement of a Lie colour algebra is similar to the loop algebra construction studied elsewhere~\cite{ABFP2008,MZ2018}.
\end{rmk}

\subsection{loop module and quantum mechanical systems}\label{sec:loopquantum}
A common way to obtain a \( \Gamma \)-graded colour supersymmetric quantum mechanical system is by modifying an existing supersymmetric system.
This is the approach taken in (for example) \cite{BD2020a}, where 
the \( \Ztzt \)-graded supersymmetric system
presented in~\cite[Section~III]{BD2020a}
can be obtained by introducing an additional central charge to Witten's model~\cite[Section~6]{Witten1981}.
In~\cite{BD2020a}, the Hilbert space realisation for this \( \Ztzt \)-graded system was constructed independently from that of Witten's model.
In this section, we will demonstrate that
this \( \Ztzt \)-graded Hilbert space realisation can be constructed from Witten's realisation using a loop module.
In this way, the loop module has already appeared in the Lie colour algebra literature and is a natural structure to consider in the study of Lie colour algebras.

%
%

Witten considered~\cite[Section~6]{Witten1981} a superalgebra with even sector spanned by a Hamiltonian \( H \) and odd sector spanned by two supercharges \( Q_1 \) and \( Q_2 \) satisfying
\[
	\comm{Q_i}{H} = 0 \qquad \acomm{Q_i}{Q_j} = \delta_{ij}H
\]
for \( i,j=1,2 \).
This algebra was realised as operators acting on two component Pauli spinors in the Hilbert space \( \Hilb^\textup{W} = \Lp{2}(\RR)\otimes\CC^2 \) via
\begin{align*}
	H &= \frac{1}{2}\left(p^2+W^2(x)+\hbar\sigma_3\dv{W}{x}\right), &
	Q_1 &= \frac{1}{2}(\sigma_1p+\sigma_2W(x)), &
	Q_2 &= \frac{1}{2}(\sigma_2p-\sigma_1W(x)).
\end{align*}
where \( \sigma_i \) (\( i=1,2,3 \)) are the Pauli matrices, \( W \in\cts[\infty](\RR) \) and \( p=-i\hbar\dv*{x} \) is the momentum operator.
This Hilbert space representation is a \( \Ztwo \)-graded representation of the superalgebra, with grading given by \( \CC^2=\CC_0\oplus\CC_1 \)
(that is, the first component of the spinor has an even grade and the second component has an odd grade).

By adding a central charge,
\[
	Z = -\frac{i}{2}\left(\sigma_3(p^2+W^2(x))+\hbar\dv{W}{x}\right)
\]
the original superalgebra becomes a \( \Ztzt \)-graded colour Lie superalgebra (without requiring recolouration) with sectors spanned by
\[
	00\text{-sector: } H \qquad 01\text{-sector: } Q_1 \qquad 11\text{-sector: } Z \qquad 10\text{-sector: } Q_2 
\]
and satisfying the following additional relations
\[
	\comm{Q_2}{Q_1} = Z \qquad \acomm{Q_i}{Q_i} = H
\]
while all other Lie colour brackets vanish.
We can easily verify that \( \Hilb^\textup{W} \) becomes an ungraded module of the Lie colour algebra.

Taking the loop module \( \loopmodule{\Hilb^\textup{W}}{\Ztwo} \) turns \( \Hilb^\textup{W} \) into a \( \Ztzt \)-graded module.
Here, we use \( \Ztwo\iso\{00,11\}\leq\Ztzt \).
Set
\[
	\Hilb_{00} = \Hilb^\textup{W}_0\otimes e_{00} \qquad
	\Hilb_{11} = \Hilb^\textup{W}_0\otimes e_{11} \qquad
	\Hilb_{01} = \Hilb^\textup{W}_1\otimes e_{01} \qquad
	\Hilb_{10} = \Hilb^\textup{W}_1\otimes e_{10}.
\]
Then, the loop module is \( \loopmodule{\Hilb^\textup{W}}{\Ztwo}=\Hilb_{00}\oplus\Hilb_{01}\oplus\Hilb_{11}\oplus\Hilb_{10}=\Lp{2}(\RR)\otimes\CC^4 \) (note the order of the sectors).
We can then explicitly compute the representation on four-component spinors:
\begin{align*}
	\rho^{\Ztwo}(H)&=
	\begin{pmatrix}
		H & 0\\
		0 & H
	\end{pmatrix},
	&
	\rho^{\Ztwo}(Q_1)&=
	\begin{pmatrix}
		Q_1 & 0\\
		0 & Q_1
	\end{pmatrix},
	&
	\rho^{\Ztwo}(Q_2)&=
	\begin{pmatrix}
		0 & Q_2\\
		Q_2 & 0
	\end{pmatrix},
	&
	\rho^{\Ztwo}(Z)&=
	\begin{pmatrix}
		0 & Z\\
		Z & 0
	\end{pmatrix}
\end{align*}
where \( \rho^{\Ztwo} \) is the representation for \( \loopmodule{\Hilb^\textup{W}}{\Ztwo} \).
Setting \( H_{00}=2\rho^{\Ztwo}(H),\, Q_{01}=\rho^{\Ztwo}(Q_1),\, Q_{10}=\rho^{\Ztwo}(Q_2) \) and \( Z_{11}=2\rho^{\Ztwo}(Z) \), we obtain the \( \Ztzt \)-supersymmetry model in~\cite[Section~III]{BD2020a} (with \( m=1/2 \)).

The above example demonstrates that the loop module provides a natural method for constructing the Hilbert space of a \( \Ztzt \)-graded supersymmetry model from a \( \Ztwo \)-graded supersymmetry model.
The loop module can be applied to other colour supersymmetric models constructed in a similar way.

\subsection{Loop module and irreducibility}
Under the assumptions outlined at the beginning of this section,
and for \( \g \) a \( \Gamma \)-graded Lie algebra (i.e.\ with trivial commutation factor \( \varepsilon(\alpha,\beta) = 1 \) for all \( \alpha,\beta\in\Gamma \)),
\cite{MZ2018} showed that every \( \Gamma \)-graded irreducible module can be obtained from the non-graded ones via the loop module:
\begin{thm}[{Mazorchuk--Zhao~\cite[Theorem~24]{MZ2018}}]\label{thm:MZ2018}
	Let \( W \) be a \( \Gamma \)-graded irreducible \( \g \)-module.
	There exists a subgroup \( H\leq \Gamma \) and an ungraded irreducible \( \g \)-module \( V \) such that \( W \) is isomorphic to \( \loopmodule{V}{H} \) as a \( \Gamma \)-graded module.
\end{thm}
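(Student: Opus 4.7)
The plan is to construct $H$ and $V$ intrinsically from $W$, and then exhibit an explicit graded isomorphism $\loopmodule{V}{H} \iso W$. I would start by picking any nonzero ungraded irreducible submodule $V$ of $W$: if $W$ is finite-dimensional this exists by choosing a nonzero ungraded submodule of smallest dimension, and in general one appeals to a Zorn-type argument that uses graded irreducibility. The subgroup $H$ and the required $\Gamma/H$-grading on $V$ will both be forced by the way $V$ sits inside $W$, through an action of the character group $\hat\Gamma = \operatorname{Hom}(\Gamma, \units{\FF})$; this group is large enough to do the job because $\FF$ contains a primitive $\card{\Gamma}$-th root of unity.

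For each $\chi \in \hat\Gamma$, define $\tau_\chi \colon W \to W$ by $\tau_\chi(w) = \chi(\gamma)w$ on $w \in W_\gamma$. Although $\tau_\chi$ twists the $\g$-action by a character and so is not itself a $\g$-module homomorphism, it carries ungraded submodules to ungraded submodules and preserves irreducibility. Let $\hat H = \{\chi \in \hat\Gamma : \tau_\chi V = V\}$ and let $H \leq \Gamma$ be its annihilator. The ungraded submodule $\sum_{\chi}\tau_\chi V$ is $\hat\Gamma$-invariant, and since the joint $\hat\Gamma$-eigenspaces in $W$ are exactly the homogeneous components $W_\gamma$, this invariance forces the sum to be $\Gamma$-graded. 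Graded irreducibility then gives $\sum_\chi \tau_\chi V = W$, and \Cref{prop:sumsimpledirect} refines this into a direct sum $W = \bigoplus_{[\chi] \in \hat\Gamma/\hat H} \tau_\chi V$.

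By the same eigenspace argument, the $\hat H$-stability of $V$ endows $V$ with the $\Gamma/H$-grading $V_\Lambda = V \cap \bigoplus_{\gamma \in \Lambda} W_\gamma$, which is precisely what is needed to form $\loopmodule{V}{H}$. I would then define $\Psi\colon \loopmodule{V}{H} \to W$ on homogeneous elements by $v \otimes e_\gamma \mapsto \pi_\gamma(v)$ for $v \in V_{\gamma + H}$, where $\pi_\gamma$ is the projection onto $W_\gamma$. The identity $\pi_{\alpha+\gamma}(xv) = x\pi_\gamma(v)$ for $x \in \g_\alpha$ shows that $\Psi$ is a $\g$-module homomorphism of degree $0$, and its image is a nonzero graded submodule of $W$, hence equals $W$ by graded irreducibility.

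The step I expect to be most delicate is injectivity of $\Psi$. In the finite-dimensional case this is clean: $\dim \loopmodule{V}{H} = \card{H} \cdot \dim V = \card{\hat\Gamma / \hat H}\cdot \dim V = \dim W$, so surjectivity automatically upgrades to bijectivity. In greater generality one must use the decomposition $W = \bigoplus \tau_\chi V$ and the compatibility between the twists $\tau_\chi$ and the projections $\pi_\gamma$ to show that an element of $\ker\Psi$ contributes trivially to each summand $\tau_\chi V$, ruling out a nonzero kernel. Once $\Psi$ is shown to be an isomorphism, the conclusion $W \iso \loopmodule{V}{H}$ as $\Gamma$-graded $\g$-modules is immediate.
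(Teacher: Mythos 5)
First, a framing point: the paper does not actually prove this statement --- it imports it from Mazorchuk--Zhao and only remarks that the argument carries over to colour algebras --- so the natural in-paper comparison is with \Cref{thm:coarseirrep}, which runs exactly your map \( \Psi \) (there called \( \Phi \)) and deliberately stops at the weaker conclusion \( W\iso\loopmodule{V}{H}/\ker\Phi \). All of the difficulty of the present theorem is concentrated in arranging \( \ker\Psi=0 \), and that is precisely where your argument has a genuine gap.

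The problematic step is ``\Cref{prop:sumsimpledirect} refines this into a direct sum \( W=\bigoplus_{[\chi]}\tau_{\chi}V \) over \( [\chi] \) ranging over all cosets of the stabiliser \( \{\chi:\tau_{\chi}V=V\} \).'' That proposition only produces a direct sum over \emph{some} subset of the translates; it does not assert that all of the distinct translates are independent, and in general they are not. The failure occurs exactly when \( W \) is not multiplicity-free as an ungraded module, and graded-irreducible modules need not be: the paper's own \( W=\loopmodule{V_{1}^{\mathrm{E}}}{H_{2}} \) for the \( \Ztzt \)-graded Lie algebra \( \spl[2] \) (the \( \lambda=1 \) odd case) is \( \Gamma \)-graded irreducible, yet by \Cref{thm:VHdirectsum} together with \( V_{1}^{f}\iso V_{1} \) it is isomorphic to \( V_{1}\oplus V_{1} \) as an ungraded module. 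Its ungraded irreducible submodules form a one-parameter family of copies of \( V_{1} \); because \( W \) is graded-irreducible, \( \pdual{\Gamma} \) acts on this family through a faithful Klein four-group of involutions, so all but six members have trivial stabiliser. For such a generic \( V \) your recipe returns \( H=\Gamma \), hence \( \dim\loopmodule{V}{H}=\card{\Gamma}\cdot\dim V_{1}=8 \), twice \( \dim W \); correspondingly the four distinct translates \( \tau_{\chi}V \) all sit inside the four-dimensional \( W \) and cannot possibly be independent. The dimension count \( \dim\loopmodule{V}{H}=\card{H}\dim V=\dim W \), on which the injectivity of \( \Psi \) rests, therefore fails.

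The theorem is of course still true for this \( W \): one of the six special submodules has stabiliser of order two, which yields \( H=\{00,11\} \) and a four-dimensional loop module. But this shows what is missing from your write-up: you must either choose \( V \) with maximal stabiliser (equivalently, minimal support) and then \emph{prove} that for such a choice the distinct translates really are independent, or define \( H \) from the support of \( V \) rather than from the stabiliser of an arbitrary \( V \). That independence argument is the actual content of Mazorchuk--Zhao's Theorem~24 (and of the Brauer-invariant analysis of Elduque--Kochetov), not a formal consequence of \Cref{prop:sumsimpledirect}. The remaining ingredients of your proof --- existence of an ungraded irreducible submodule, the fact that \( \tau_{\chi} \) permutes submodules, the induced \( \Gamma/H \)-grading \( V_{\Lambda}=V\cap\bigoplus_{\gamma\in\Lambda}W_{\gamma} \), and the surjectivity of \( \Psi \) --- are all correct and agree with \Cref{thm:coarseirrep}.
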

Note that this theorem still holds if \( \g \) is instead a Lie colour algebra with non-trivial commutation factor (e.g.\ if \( \g \) is a \( \Ztwo \)-graded Lie superalgebra).
Indeed, the proof works for modules over any associative algebra; in particular, over the universal enveloping algebra for \( \g \).
A similar result~\cite[Theorem~31]{MZ2018} is available if \( \Gamma \) is an infinite group,
but we will restrict our attention to the finite case in this paper.

The above theorem reduces the study of \( \Gamma \)-graded irreducible modules to the study of ungraded modules.
In combination with discolouration,
this allows us to derive all such \( \Gamma \)-graded irreducible colour algebra representations from the \( \Ztwo \)-graded irreducible superalgebra representations.

However, one difficulty with applying \Cref{thm:MZ2018} in practice is that the choice of \( H \) in the theorem is not necessarily unique
i.e.\ potentially \( W\iso\loopmodule{V}{H_1}\iso\loopmodule{V}{H_2} \) for non-isomorphic subgroups \( H_1,H_2\leq G \).
Determining when two loop modules are isomorphic is an open problem~\cite[Problem~36]{MZ2018}.

Building on the work of~\cite{MZ2018}, it was shown that the loop module construction defines a functor and can be extended to an equivalence between two categories (see~\cite{EK2017} for details).
Moreover, using Brauer invariants, it was shown that the loop module defines a bijection between equivalence classes of \( \Gamma/H \)-graded irreducible and \( \Gamma \)-graded irreducible modules~\cite[Remark~7.2]{EK2017}.

\section{Iterated loop module bijection}\label{sec:loopirrep}
In this section, we will 
present a bijection between equivalence classes of \( \Gamma/H \)- and \( \Gamma \)-graded irreducible representations, similar to that of~\cite{EK2017}.
However, instead of using Brauer invariants, we will provide an alternate iterative approach.
Given a finite grading group \( \Gamma \), we suggest using the Jordan--H\"older decomposition
\begin{equation}\label{eq:jordanholder}
	\Gamma\rhd N_{\ell} \rhd N_{\ell-1} \rhd \cdots \rhd N_1\rhd \{0\}
\end{equation}
and looking at submodules of the following iterated loop module
\begin{equation}\label{eq:iteratedloop}
	\loopmodule{\cdots\loopmodule{\loopmodule{V}{\Gamma/N_\ell}}{N_\ell/N_{\ell-1}}\cdots}{N_1/\{0\}}.
\end{equation}
(Note that we are making use of the isomorphism \( \Gamma/N_{\ell}\iso (\Gamma/N_{\ell-1})/(N_\ell/N_{\ell-1}) \) (for instance)).
Note that this iterative approach uses only simple modules, which greatly simplifies the possible options.

Our strategy is as follows:
we first prove that every finite-dimensional \( \Gamma \)-graded irreducible module appears as a quotient of a loop module (\Cref{thm:coarseirrep}).
The loop module is completely reducible both as a \( \Gamma/H \)- and \( \Gamma \)-graded module (\Cref{thm:VHdirectsum,thm:VHsemisimpleGmodule} respectively) and we explicitly provide the direct sum decompositions.
Comparing these direct sum decompositions limits the possible \( \Gamma \)-graded irreducible modules to either \( \Gamma/H \)-graded irreducible modules or their loop modules.
We use these results to construct the bijection between \( \Gamma/H \)- and \( \Gamma \)-graded modules (\Cref{thm:bijection}) up to equivalence defined by the irreducible direct summands.

\subsection{Quotient modules of the loop module}\label{sec:loopmodulequotient}
As in \Cref{sec:loopmodule},
let \( \g \) be a \( \Gamma \)-graded Lie colour algebra
and \( V \) be a finite-dimensional \( \Gamma/H \)-graded irreducible \( \g \)-module.
With the iterated loop module construction of~\eqref{eq:iteratedloop} in mind,
we will assume that \( H \) is simple.
Note that, since \( H \) is abelian, finite and simple, we must have that \( H\iso\ZZ_p \) for some prime \( p \).

In general, the loop module \( \loopmodule{V}{H} \) will not be \( \Gamma \)-graded irreducible.
However, to construct such a \( \Gamma \)-graded irreducible \( \g \)-module, we can simply quotient out by a \( \Gamma \)-graded maximal submodule.
Since \( \loopmodule{V}{H} \) may have many \( \Gamma \)-graded maximal submodules, this procedure may yield many nonisomorphic \( \Gamma \)-graded modules for each \( \Gamma/H \)-graded maximal submodule.
Regardless, we will show that every \( \Gamma \)-graded irreducible \( \g \)-module can be constructed in this way.

\begin{lem}\label{thm:coarseirrep}
	Let \( \g=\bigoplus_{\gamma\in\Gamma}\g_{\gamma} \) be a Lie colour algebra.
	Let \( W = \bigoplus_{\gamma\in\Gamma}W_{\gamma} \) be a finite-dimensional \( \Gamma \)-graded irreducible \( \g \)-module.
	Then, for \( H\leq \Gamma \), there exists a \( \Gamma/H \)-graded irreducible \( \g \)-module \( V \)
	such that \( W \) is a graded quotient of \( \loopmodule{V}{H} \).
\end{lem}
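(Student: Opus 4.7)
The plan is to coarsen the $\Gamma$-grading on $W$ to a $\Gamma/H$-grading, extract a $\Gamma/H$-graded irreducible submodule $V$ from this coarsening, and exhibit a canonical surjection $\loopmodule{V}{H}\twoheadrightarrow W$ of $\Gamma$-graded $\g$-modules.

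First, I would coarsen the grading on $W$, setting $W^{\Gamma/H}=\bigoplus_{\Lambda\in\Gamma/H}W^{\Gamma/H}_\Lambda$ with $W^{\Gamma/H}_\Lambda=\bigoplus_{\gamma\in\Lambda}W_\gamma$. Because the action of $\g_\alpha$ respects the $\Gamma$-grading, it certainly respects this coarser $\Gamma/H$-grading, so $W^{\Gamma/H}$ is a $\Gamma/H$-graded $\g$-module. Since $W$ is finite-dimensional, the poset of nonzero $\Gamma/H$-graded submodules of $W^{\Gamma/H}$ admits a minimal element $V$; by minimality $V$ is $\Gamma/H$-graded irreducible, and this is our candidate.

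Next, I would define a map $\phi\colon\loopmodule{W^{\Gamma/H}}{H}\to W$ by $\phi(w\otimes e_\gamma)=w_\gamma$, where, for $w\in W^{\Gamma/H}_{\gamma+H}$, the symbol $w_\gamma$ denotes the $\gamma$-component of $w$ in the original $\Gamma$-grading. This map is homogeneous of degree $0$ by construction, since it sends $\loopmodule{W^{\Gamma/H}}{H}_\gamma=W^{\Gamma/H}_{\gamma+H}\otimes e_\gamma$ into $W_\gamma$. For the intertwining property, given $x\in\g_\alpha$ and $w\in W^{\Gamma/H}_{\gamma+H}$, decomposing $w=\sum_{\beta\in\gamma+H}w_\beta$ with $w_\beta\in W_\beta$ gives $xw_\beta\in W_{\alpha+\beta}$, and the unique summand lying in $W_{\alpha+\gamma}$ is $xw_\gamma$. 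Hence $\phi\bigl(x(w\otimes e_\gamma)\bigr)=\phi\bigl((xw)\otimes e_{\alpha+\gamma}\bigr)=(xw)_{\alpha+\gamma}=xw_\gamma=x\phi(w\otimes e_\gamma)$, so $\phi$ is a $\g$-module homomorphism.

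Finally, I would restrict $\phi$ to $\loopmodule{V}{H}$, which is a $\Gamma$-graded submodule of $\loopmodule{W^{\Gamma/H}}{H}$ because $V_\Lambda\subseteq W^{\Gamma/H}_\Lambda$ for each $\Lambda$. To see that this restriction is nonzero, pick any nonzero $v\in V_\Lambda$ and decompose $v=\sum_{\gamma\in\Lambda}v_\gamma$; some $v_{\gamma_0}\neq 0$, and so $\phi(v\otimes e_{\gamma_0})=v_{\gamma_0}\neq 0$. The image of $\phi|_{\loopmodule{V}{H}}$ is therefore a nonzero $\Gamma$-graded submodule of $W$, which must equal $W$ by $\Gamma$-graded irreducibility. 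Thus $W$ is a graded quotient of $\loopmodule{V}{H}$, completing the proof. I do not foresee a serious obstacle; the main care point is verifying that the projection-style map $\phi$ really is a $\g$-module homomorphism despite its definition via projection onto a single $\Gamma$-homogeneous component, but this is exactly enforced by the compatibility of the grading on $\g$ with the grading on $W$.
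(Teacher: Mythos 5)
Your proposal is correct and follows essentially the same route as the paper: extract a $\Gamma/H$-graded irreducible submodule $V$ of the coarsened module $W$, map $v\otimes e_\gamma$ to the $W_\gamma$-component of $v$, verify this projection-style map is a degree-$0$ intertwiner, and conclude by irreducibility of $W$ that the image is all of $W$. The only cosmetic difference is that you define the map on $\loopmodule{W^{\Gamma/H}}{H}$ and restrict, whereas the paper defines it directly on $\loopmodule{V}{H}$; the substance is identical.
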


\begin{proof}
	Note that \( W \) has a natural \( \Gamma/H \)-grading given by
	\[
		\bigoplus_{\Lambda\in\Gamma/H}\left(\bigoplus_{\gamma\in\Lambda}W_{\gamma}\right).
	\]
	Let \( V \) be a \( \Gamma/H \)-graded irreducible submodule of \( W \).
	Such a \( V \) exists by strong induction on the dimension of \( W \).
	(See also~\cite[Lemma~23]{MZ2018}.)

	Take \( v\in V_\Lambda \) and write
	\begin{equation}\label{eq:coarseirreplincomb}
		v = \sum_{\gamma\in\Lambda}w_\gamma
	\end{equation}
	for \( w_\gamma\in W_\gamma \).
	For \( \gamma\in\Lambda \), let \( \pi_\gamma\colon V_\Lambda\to W_\gamma \)
	be the projection of \( V_\Lambda \) onto \( W_\gamma \),
	\[
		\pi_\gamma(v) = w_\gamma.
	\]
	Note that \( \pi_{\gamma} \) is well-defined because the linear combination in~\eqref{eq:coarseirreplincomb} is unique due to the direct sum \( V_\Lambda \subseteq \bigoplus_{\gamma\in\Lambda} W_{\gamma} \).
	Additionally, it is easy to see that \( \pi_{\gamma} \) is linear.

	Note that,
	for \( x\in\g_{\alpha} \),
	we have \( xv = \sum_{\gamma\in\Lambda}x\pi_{\gamma}(v) \).
	Since \( x\pi_{\gamma}(v)\in W_{\alpha+\gamma} \) and 
	such a linear combination of homogeneous elements is unique, we have that \( \pi_{\alpha+\gamma}(xv) = x\pi_{\gamma}(v) \) by the definition of \( \pi_{\alpha+\gamma} \).

	Define a map \( \Phi\colon \loopmodule{V}{H} \to W \) by
	\[
		\Phi(v\otimes e_{\gamma}) = \pi_{\gamma}(v).
	\]
	Then
	\[
		x\Phi(v\otimes e_{\gamma}) = x\pi_{\gamma}(v) = \pi_{\alpha+\gamma}(xv) = \Phi(x(v\otimes e_{\gamma}))
	\]
	so \( \Phi \) is in intertwiner.

	Therefore, 
	by the First Isomorphism Theorem,
	\( \im \Phi \iso \loopmodule{V}{H}/\ker \Phi \).
	But \( \im \Phi \) is a \( \Gamma \)-graded submodule of \( W \),
	so \( \im \Phi=0 \) or \( \im \Phi=W \) by irreducibility of \( W \).
	However,
	since \( V\neq 0 \),
	and \( \Phi \) is defined using projections,
	we clearly have \( \im \Phi \neq 0 \).
	We conclude that \( \im \Phi=W \);
	hence \( W\iso \loopmodule{V}{H}/\ker \Phi \) as claimed.
\end{proof}

\Cref{thm:coarseirrep} provides us with a general method of obtaining the \( \Gamma \)-graded irreducible modules from the \( \Gamma/H \)-graded ones; namely find all \( \Gamma \)-graded maximal submodules of \( \loopmodule{V}{H} \) and take quotients.
Note that we can first fix \( H \) and then find all \( \Gamma \)-graded irreducible modules.
Compare this with~\Cref{thm:MZ2018}, where we must choose a different \( H \) for each \( \Gamma \)-graded module.

\subsection{\texorpdfstring{\( \Gamma/H \)}{Gamma/H}-grading: loop module is completely reducible}\label{sec:G/Hsemisimple}
The loop module is a completely reducible \( \Gamma/H \)-graded module.
It is shown in~\cite{MZ2018} that 
the irreducible direct summands are isomorphic to \( V \) as vector spaces, but the \( \g \)-action is twisted by an element of the character group \( \pdual{\Gamma} \).
These direct summands will restrict which quotients are possible,
and hence which \( \Gamma \)-graded irreducible representations are possible in \Cref{thm:coarseirrep}.
We will now briefly recount these results.

Let \( \pdual{\Gamma} \) be the character group:
the group of all homomorphisms \( \pdual{\Gamma}\to\units{\FF} \) with pointwise multiplication the group operation.
Our earlier assumption that \( \FF \) contains a primitive root of unity of order \( \card{\Gamma} \)
ensures that \( \pdual{\Gamma}\iso\Gamma \).
Let \( \orthog{H} = \{f\in\pdual{\Gamma}\mid f(\eta) = 1~\text{for all}~\eta\in H\} \).
Note that since \( \Gamma \) is finite, \( \card{\pdualtwist} = \card{H} \).

Let \( V = \bigoplus_{\Lambda\in\Gamma/H} V_{\Lambda} \) be a
\( \Gamma/H \)-graded \( \g \)-module with representation \( \rho\colon\g\to\End(V) \).
For each \( f\orthog{H}\in \pdualtwist \),
we define the twisted module \( V^f \) to be equal to \( V \) as a vector space,
but with the following twisted representation \( \rho^f\colon\g\to\End(V^f) \)
\[
	\rho^f(x)v = f(\alpha)\rho(x)v \qquad \text{for}~x\in\g_{\alpha},\,v\in V.
\]
Note that \( f(\alpha) \) will be a root of unity.

%

As a \( \Gamma/H \)-module, the loop module is a direct sum of all such twisted modules \( V^f \):
\begin{lem}[{Mazorchuk--Zhao~\cite[Lemma~20]{MZ2018}}]\label{thm:VHdirectsum}
	Let \( H
	\) be a finite 
	group and \( V = \bigoplus_{\Lambda\in\Gamma/H}V_\Lambda \) be a \( \Gamma/H \)-graded module of a Lie colour algebra \( \g \) over \( \FF \).
	Assume that \( \FF \) has a primitive root of unity of order \( \card{\Gamma} \).
	As a \( \Gamma/H \)-graded module, the loop module \( \loopmodule{V}{H} \) is isomorphic to
	\[
		\bigoplus_{f\in \pdualtwist}V^f.
	\]
\end{lem}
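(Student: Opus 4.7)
The plan is to build an explicit \( \Gamma/H \)-graded intertwiner
\[ \Phi\colon \bigoplus_{f\in\pdualtwist} V^f \longrightarrow \loopmodule{V}{H} \]
and then invoke the orthogonality of characters on the finite abelian group \( H \) to conclude that \( \Phi \) is a bijection. Since restriction from \( \Gamma \) to \( H \) induces the standard isomorphism \( \pdualtwist \iso \pdual{H} \), the indexing set on the left has exactly \( \card{H} \) elements, so dimensions agree automatically (both are \( \card{H} \dim V \)); the work is to exhibit the isomorphism.

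For each representative \( f\in \pdual{\Gamma} \), define \( \Phi_f\colon V\to \loopmodule{V}{H} \) on homogeneous \( v\in V_\Lambda \) by
\[ \Phi_f(v) \;=\; \sum_{\gamma\in\Lambda} f(\gamma)^{-1}\, v\otimes e_\gamma, \]
and extend linearly. Reindexing the sum \( \gamma\mapsto\gamma+\alpha \) in the computation of \( x\,\Phi_f(v) \) for \( x\in\g_\alpha \) yields \( x\,\Phi_f(v) = f(\alpha)\,\Phi_f(xv) = \Phi_f(\rho^f(x)v) \), so \( \Phi_f \) is a \( \g \)-intertwiner from \( V^f \) into \( \loopmodule{V}{H} \), clearly homogeneous of degree \( 0 \) with respect to the \( \Gamma/H \)-grading. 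If \( f \) is replaced by \( fg \) for some \( g\in\orthog{H} \), then on each sector \( V_\Lambda \) the map \( \Phi_{fg} \) differs from \( \Phi_f \) by the single scalar \( g(\gamma_0)^{-1} \) (any \( \gamma_0\in\Lambda \) works, because \( g \) is trivial on \( H \)); hence the image \( \Phi_f(V) \) depends only on the coset \( f\orthog{H}\in\pdualtwist \), and \( \Phi = \bigoplus_f \Phi_f \) is well defined.

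For injectivity, suppose \( \sum_f \Phi_f(v_f) = 0 \) with \( v_f\in V_\Lambda \). Fixing \( \gamma_0\in\Lambda \) and equating the coefficient of each \( e_{\gamma_0+\eta} \) (\( \eta\in H \)) produces the system \( \sum_f \chi_f(\eta)^{-1} w_f = 0 \) for all \( \eta\in H \), where \( \chi_f = f|_H\in\pdual{H} \) and \( w_f = f(\gamma_0)^{-1}v_f \). Under the isomorphism \( \pdualtwist \iso \pdual{H} \), the family \( \{\chi_f\} \) is exactly \( \pdual{H} \); linear independence of distinct characters of \( H \) then forces \( w_f = 0 \), hence \( v_f = 0 \). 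For surjectivity, any basis element \( v\otimes e_{\gamma_0} \in \loopmodule{V}{H} \) (with \( v\in V_\Lambda \), \( \gamma_0\in\Lambda \)) admits the explicit preimage \( v_f = \card{H}^{-1} f(\gamma_0)\, v \): the orthogonality relation \( \card{H}^{-1}\sum_{\chi\in\pdual{H}} \chi(\eta) = \delta_{\eta,0} \) gives \( \sum_f \Phi_f(v_f) = v\otimes e_{\gamma_0} \) directly.

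The main nuisance is bookkeeping: ensuring that the twisting by \( f \), the choice of representative in \( \pdualtwist \), and the choice of basepoint \( \gamma_0 \) in each coset \( \Lambda \) are all handled consistently so that no ambiguity obstructs the sum. Once these conventions are fixed, the content reduces to character orthogonality on the finite abelian group \( H \) and is essentially linear algebra.
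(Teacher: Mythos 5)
The paper does not prove this lemma itself --- it is stated with a citation to Mazorchuk--Zhao \cite[Lemma~20]{MZ2018} --- and your argument is correct and is essentially the standard proof from that source: an explicit degree-$0$ intertwiner $\Phi_f(v)=\sum_{\gamma\in\Lambda}f(\gamma)^{-1}v\otimes e_\gamma$ on each $V^f$, with bijectivity from linear independence/orthogonality of the characters of $H$ (using that restriction identifies $\pdualtwist$ with $\pdual{H}$). Your division by $\card{H}$ in the surjectivity step is legitimate because the existence of a primitive root of unity of order $\card{\Gamma}$ forces the characteristic of $\FF$ not to divide $\card{H}$.
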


If we assume that \( V \) is irreducible,
then \Cref{thm:VHdirectsum} (together with~\Cref{prop:twistproperties}~\ref{prop:V-irred} below) shows that \( \loopmodule{V}{H} \) is completely reducible as a \( \Gamma/H \)-graded module and gives a direct sum decomposition.
In combination with \Cref{thm:coarseirrep},
this restricts the possible structures of the \( \Gamma \)-graded irreducible modules.

For use in later proofs,
we now prove three important properties of \( V^f \).

\begin{prop}\label{prop:twistproperties}
	The following three properties hold:
	\begin{enumerate}[label=(\roman*)]
		\item\label{prop:V-irred} If \( V \) is \( \Gamma/H \)-graded irreducible then \( V^f \) is \( \Gamma/H \)-graded irreducible.
		\item\label{prop:VH-=V-H} There is an equality of \( \g \)-modules: \( (\loopmodule{V}{H})^f = \loopmodule{V^f}{H} \).
		\item\label{prop:V-=Vifgrade} If \( V \) can be given a \( \Gamma \)-grading that refines the \( \Gamma/H \)-grading, then \( V^f\iso V \).
	\end{enumerate}
\end{prop}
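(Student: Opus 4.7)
The plan is to handle the three parts independently, since each has a different character. Parts (i) and (ii) are essentially unwinding of definitions, while (iii) is the substantive statement and requires constructing an explicit isomorphism from the character data.

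For part~\ref{prop:V-irred}, the key observation is that $V$ and $V^f$ share the same underlying vector space and the same $\Gamma/H$-grading, so a $\Gamma/H$-graded subspace $W \subseteq V$ is a submodule of $V$ if and only if it is a submodule of $V^f$. Indeed, for $x \in \g_\alpha$ and $w \in W$, the twisted action gives $\rho^f(x)w = f(\alpha)\rho(x)w$, and since $f(\alpha)$ is a nonzero scalar (a root of unity), $\rho(x)w \in W \iff \rho^f(x)w \in W$. So the lattices of graded submodules coincide, and irreducibility transfers.

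For part~\ref{prop:VH-=V-H}, I would check that both modules have the same underlying $\Gamma$-graded vector space $\bigoplus_{\gamma\in\Gamma} V_{\gamma+H} \otimes e_\gamma$ and the same action. On a homogeneous element $v\otimes e_\beta$ with $x\in\g_\alpha$, both sides give $f(\alpha)\, xv \otimes e_{\alpha+\beta}$: on the left, the outer twist contributes $f(\alpha)$ to the loop action $x(v\otimes e_\beta) = xv\otimes e_{\alpha+\beta}$; on the right, the twist is absorbed into the $V^f$-action before forming the loop module. Equality of the two representations is then immediate.

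Part~\ref{prop:V-=Vifgrade} is where the real work happens. Given a $\Gamma$-refinement $V = \bigoplus_{\gamma\in\Gamma}W_\gamma$, I would define $\Psi\colon V \to V^f$ by $\Psi(w) = f(\gamma)\,w$ for $w\in W_\gamma$, extended linearly. Bijectivity is clear since each $f(\gamma) \in \units{\FF}$. For $\Psi$ to be an intertwiner, we need $\Psi(\rho(x)w) = \rho^f(x)\Psi(w)$ for $x\in\g_\alpha$, $w\in W_\gamma$; since $\rho(x)w \in W_{\alpha+\gamma}$, the left side is $f(\alpha+\gamma)\,\rho(x)w$, while the right side is $f(\alpha)f(\gamma)\,\rho(x)w$, and these agree precisely because $f$ is a character. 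For $\Psi$ to be a morphism of $\Gamma/H$-graded modules, I need $\Psi$ homogeneous of degree $0$ with respect to the $\Gamma/H$-grading: since $W_\gamma \subseteq V_{\gamma+H}$ and $\Psi$ only rescales within each $W_\gamma$, the image lies in $V_{\gamma+H}$ as required. The main conceptual point, and the one that justifies restricting $f$ to $\orthog{H}$ elsewhere in the theory, is that although $\Psi$ depends on a choice of $\Gamma$-refinement, the existence of any such refinement suffices to produce the $\Gamma/H$-graded isomorphism $V^f \iso V$. No step poses a genuine obstacle; the only subtlety is keeping track of which gradings $\Psi$ is required to respect.
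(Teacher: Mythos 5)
Your proposal is correct and follows essentially the same route as the paper: part (i) by observing the scalar twist preserves the lattice of graded submodules, part (ii) by direct comparison of the underlying spaces and actions, and part (iii) by the map $w\mapsto f(\gamma)w$ on the $\gamma$-component of the refinement, with the intertwiner property reducing to $f(\alpha+\gamma)=f(\alpha)f(\gamma)$. No gaps.
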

\begin{proof}
	Since the action of \( \rho \) only differs from \( \rho^f \) by a scalar (at most), \( \rho(\g)U\subseteq U \) if and only if \( \rho^f(\g)U\subset U \) for any submodule \( U \) of \( V \) or \( V^f \).
	Property~\ref{prop:V-irred} follows.

	For~\ref{prop:VH-=V-H},
	\( (\loopmodule{V}{H})^f \) equals \( \loopmodule{V^f}{H} \) as a vector space
	and both of these modules have the same representation:
	for \( x\in\g_{\alpha} \) and \( v\in V_{\gamma+H} \),
	\[
		x\cdot(v\otimes e_{\gamma}) = 
		f(\alpha) \rho(x)v\otimes e_\gamma.
	\]

	To show~\ref{prop:V-=Vifgrade},
	let \( V = \bigoplus_{\gamma\in\Gamma}W_{\gamma} \) be the \( \Gamma \)-grading,
	with \( V_{\Lambda} = \bigoplus_{\gamma\in\Lambda}W_{\gamma} \) for all \( \Lambda\in\Gamma/H \)
	(i.e.\ the \( \Gamma \)-grading refines the \( \Gamma/H \) grading).
	Take \( v \in W_\gamma \) and define \( \Phi\colon V\to V^f \) by
	 \[
		\Phi(v)= f(\gamma) v.
	 \]
	 Obviously, \( \Phi \) is a linear isomorphism and homogeneous of degree \( 0 \).
	 Additionally, \( \Phi \) is an intertwiner:
	 \[
	 	\Phi(\rho(x)v) 
		= f(\alpha+\gamma)\rho(x)v
		= f(\alpha)\rho(x)f(\gamma)v
		= \rho^f(x)\Phi(v)
	 \]
	 completing the proof.
\end{proof}

It is clear from the definition and the above propositions that \( V^f \) has a very similar module structure to \( V \).
This will be useful when examining the structure of \( \loopmodule{V}{H} \).


\subsection{\texorpdfstring{\( \Gamma \)}{Gamma}-grading: loop module is completely reducible}\label{sec:Gsemisimple}
We will now show that the loop module is completely reducible as a \( \Gamma \)-graded module.
Similar to \( \Gamma/H \)-graded case, the \( \Gamma \)-graded irreducible direct summands all have a similar structure: they are all (what we will call) \( H \)-parity shifts of the same \( \Gamma \)-graded irreducible module.
A \( H \)-parity shift is obtained by simply applying a circular shift to each of the sectors.

Comparing the \( \Gamma \)- and \( \Gamma/H \)-graded irreducible direct summands will give us more information about the structure of the loop module.
This will allow us to conclude that the only possible \( \Gamma \)-graded direct summands of \( \loopmodule{V}{H} \) are \( V \) with a \( \Gamma \)-grading or \( \loopmodule{V}{H} \) itself.

\subsubsection{Direct sum decomposition}
Take some \( h\in H \) and a \( \Gamma \)-graded module \( W \).
We can define the \dfn{\( H \)-parity shift of \( W \) by \( h \)}
as a new \( \Gamma \)-graded module \( W^{+h} = \bigoplus_{\gamma\in\Gamma}W^{+h}_\gamma \),
where \( W^{+h}_\gamma = W_{\gamma+h} \).
Obviously, a \( H \)-parity shift of \( W \) is still a \( \Gamma \)-graded \( \g \)-module.

Following this definition, we make three immediate observations:
\begin{enumerate}
	\item A \( H \)-parity shift might yield a module isomorphic to the original---%
		for example, any \( H \)-parity shift of the loop module \( \loopmodule{V}{H} \) is isomorphic to \( \loopmodule{V}{H} \).
	\item If \( W \) is irreducible then \( W^{+h} \) is irreducible.
	\item The \( \g \)-modules \( W \) and \( W^{+h} \) are isomorphic as \( \Gamma/H \)-graded modules.
\end{enumerate}

\begin{prop}\label{lem:Hparityshiftsubmodule}
	Let \( V \) be a \( \Gamma/H \)-graded module and \( W \) be a \( \Gamma \)-graded submodule of \( \loopmodule{V}{H} \).
	Then \( W^{+h} \) can be embedded in \( \loopmodule{V}{H} \).
\end{prop}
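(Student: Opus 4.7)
The plan is to exhibit an explicit embedding \( \Psi\colon W^{+h}\to\loopmodule{V}{H} \) by exploiting the isomorphism \( \loopmodule{V}{H}^{+h}\iso\loopmodule{V}{H} \) noted in observation~(1) above. The idea is that a global shift of the second tensor factor by \( -h \) provides an isomorphism from the \( H \)-parity-shifted loop module back to the original loop module; restricting this shift to \( W^{+h} \) (regarded as sitting inside \( \loopmodule{V}{H}^{+h} \)) will then supply the desired embedding into \( \loopmodule{V}{H} \).

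First I would define \( \Psi(v\otimes e_{\gamma+h}) = v\otimes e_{\gamma} \) on each homogeneous component and extend linearly. The crucial point that makes this well-defined is that, because \( h\in H \), the \( (\gamma+h) \)-sector of \( \loopmodule{V}{H} \) is \( V_{(\gamma+h)+H}\otimes e_{\gamma+h} = V_{\gamma+H}\otimes e_{\gamma+h} \), so \( v\otimes e_\gamma \) genuinely lies in the \( \gamma \)-sector \( V_{\gamma+H}\otimes e_\gamma \) of \( \loopmodule{V}{H} \). Next I would verify, viewing \( \Psi \) as a map \( \loopmodule{V}{H}^{+h}\to\loopmodule{V}{H} \), that it is a \( \Gamma \)-graded \( \g \)-module isomorphism: homogeneity of degree~0 is immediate from the remark above (sector \( \gamma \) of the parity shift is sector \( \gamma+h \) of the loop module, which maps to sector \( \gamma \)); injectivity is trivial; and the intertwiner property for \( x\in\g_\alpha \) is the one-line check that both \( \Psi(x(v\otimes e_{\gamma+h})) \) and \( x\Psi(v\otimes e_{\gamma+h}) \) equal \( xv\otimes e_{\alpha+\gamma} \).

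Finally, since \( W \) is a \( \Gamma \)-graded submodule of \( \loopmodule{V}{H} \), the underlying vector space \( W \) with shifted grading \( W^{+h}_\gamma = W_{\gamma+h} \) is a \( \Gamma \)-graded submodule of \( \loopmodule{V}{H}^{+h} \); composing the inclusion \( W^{+h}\hookrightarrow \loopmodule{V}{H}^{+h} \) with \( \Psi \) yields the claimed embedding. I do not expect any serious obstacle here: the entire argument rests on the single observation that, because \( h\in H \), the \( \Gamma/H \)-coset label attached to the first tensor factor is unchanged under an \( h \)-shift of the second factor. The only care required is to keep track, at each step, of which \( \Gamma \)-grading (the original one or the parity-shifted one) a given sector carries.
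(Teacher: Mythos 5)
Your proposal is correct and follows essentially the same route as the paper: both define the shift map \( \Psi(w\otimes e_{\gamma+h}) = w\otimes e_{\gamma} \) on homogeneous elements and extend linearly, the key point in each case being that \( h\in H \) leaves the coset \( \gamma+H \) (and hence the first tensor factor's sector) unchanged. You spell out the well-definedness, homogeneity, injectivity and intertwiner checks that the paper dismisses as obvious, which is fine but not a different argument.
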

\begin{proof}
	Define a map \( \Psi\colon W^{+h} \to \loopmodule{V}{H} \) by
	\[
		\Psi(w\otimes e_{\gamma+h}) =  w\otimes e_{\gamma}
	\]
	for \( w\otimes e_{\gamma+h}\in W^{+h}_\gamma \) and extending to inhomogeneous elements by linearity.
	This map is obviously an embedding of \( W^{+h} \) into \( \loopmodule{V}{H} \).
\end{proof}

From now on, we will identify \( W^{+h} \) with its embedding in \( \loopmodule{V}{H} \).

\begin{lem}\label{thm:VHsemisimpleGmodule}
	Let \( H\leq \Gamma \) be a finite simple group and \( V = \bigoplus_{\Lambda\in\Gamma/H} V_{\Lambda} \) be a finite-dimensional \( \Gamma/H \)-graded irreducible module of a Lie colour algebra \( \g \) over \( \FF \).
	Then, the loop module \( \loopmodule{V}{H} \) is completely reducible as a \( \Gamma \)-graded module with
	\[
		\loopmodule{V}{H} = \bigoplus_{h \in K} W^{+h}
	\]
	for any \( \Gamma \)-graded irreducible module \( W\subseteq \loopmodule{V}{H} \)
	and some subset \( K \subseteq H \).
\end{lem}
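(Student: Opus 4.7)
The plan is to realise \( \loopmodule{V}{H} \) as a sum of \( H \)-parity shifts of a single \( \Gamma \)-graded irreducible submodule, and then apply \Cref{prop:sumsimpledirect} to extract the desired direct sum decomposition. The main work will be showing that this sum already exhausts \( \loopmodule{V}{H} \); this should follow from the \( \Gamma/H \)-graded irreducibility of \( V \).

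First, since \( V \) is finite-dimensional and \( H \) is finite, \( \loopmodule{V}{H} \) is finite-dimensional, so I would begin by picking a non-zero \( \Gamma \)-graded submodule \( W\subseteq\loopmodule{V}{H} \) of minimal dimension, which is automatically \( \Gamma \)-graded irreducible. For each \( h\in H \), \Cref{lem:Hparityshiftsubmodule} gives an embedding \( W^{+h}\hookrightarrow\loopmodule{V}{H} \), and \( W^{+h} \) is itself \( \Gamma \)-graded irreducible since parity shifts preserve the lattice of graded submodules. I would then form
\[
	U = \sum_{h\in H} W^{+h},
\]
which is a \( \Gamma \)-graded submodule of \( \loopmodule{V}{H} \) written as a sum of \( \Gamma \)-graded irreducibles. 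Applying \Cref{prop:sumsimpledirect} yields \( U = \bigoplus_{h\in K}W^{+h} \) for some subset \( K\subseteq H \), which is the direct-sum form required by the conclusion.

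It then remains to prove \( U = \loopmodule{V}{H} \). Because \( W\subseteq\loopmodule{V}{H} \), for each \( \gamma\in\Gamma \) I can write \( W_{\gamma} = X_{\gamma}\otimes e_{\gamma} \) for a subspace \( X_{\gamma}\subseteq V_{\gamma+H} \). Unpacking the embedding \( \Psi \) of \Cref{lem:Hparityshiftsubmodule}, the image of the \( \gamma \)-component of \( W^{+h} \) inside \( \loopmodule{V}{H} \) is \( X_{\gamma+h}\otimes e_\gamma \), so the \( \gamma \)-component of \( U \) equals \( \bigl(\sum_{h\in H}X_{\gamma+h}\bigr)\otimes e_{\gamma} \). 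For each coset \( \Lambda\in\Gamma/H \) I would set \( Y_{\Lambda} = \sum_{\gamma\in\Lambda}X_{\gamma}\subseteq V_{\Lambda} \) and argue that \( Y = \bigoplus_{\Lambda\in\Gamma/H} Y_{\Lambda} \) is a \( \Gamma/H \)-graded submodule of \( V \): for \( x\in\g_\alpha \) and \( v\in X_{\gamma} \), the fact that \( W \) is a \( \Gamma \)-graded submodule forces \( xv\in X_{\alpha+\gamma}\subseteq Y_{\alpha+\Lambda} \), and this extends to all of \( Y_\Lambda \) by linearity. Since \( W\neq 0 \) implies \( Y\neq 0 \), the \( \Gamma/H \)-graded irreducibility of \( V \) gives \( Y = V \), whence \( U = \loopmodule{V}{H} \).

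The hard part will be the bookkeeping in this last paragraph: tracking sectors through the embedding \( \Psi \) and verifying that the subspaces \( Y_{\Lambda} \) genuinely assemble into a \( \g \)-submodule of \( V \) rather than merely a graded subspace. Once that is in place, the decomposition drops out of \Cref{prop:sumsimpledirect}. It is worth noting that the argument itself does not seem to use the simplicity of \( H \); that hypothesis is really there so that this lemma can be iterated cleanly along the Jordan--H\"older filtration~\eqref{eq:jordanholder} in the next subsection, where only \( K = \{0\} \) or \( K = H \) will arise.
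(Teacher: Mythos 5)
Your proposal is correct and follows essentially the same route as the paper: sum the $H$-parity shifts of one $\Gamma$-graded irreducible submodule, show the sum exhausts \( \loopmodule{V}{H} \) by pushing \( W \) down to a non-zero \( \Gamma/H \)-graded submodule of \( V \) (your \( Y=\bigoplus_{\Lambda}Y_{\Lambda} \) is exactly the image of the paper's intertwiner \( \Phi(w\otimes e_{\gamma})=w \)) and invoking irreducibility of \( V \), then extract the direct sum via \Cref{prop:sumsimpledirect}. Your closing observation is also accurate: simplicity of \( H \) is not needed here and only enters in the counting argument of \Cref{cor:VorVHirred}.
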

\begin{proof}
	Take any \( \Gamma \)-graded irreducible module \( W\subseteq \loopmodule{V}{H} \).
	Such a module exists because \( \loopmodule{V}{H} \) is finite-dimensional.

	Define a map \( \Phi\colon W \to V \) by 
	\[
		\Phi(w\otimes e_{\gamma}) = w
	\]
	for \( w\otimes e_\gamma \in W_\gamma \subseteq \loopmodule{V}{H}_\gamma \)
	and extending to inhomogeneous elements by linearity.
	We claim that \( \Phi \) is a \( \Gamma/H \)-graded intertwiner.
	That \( \Phi \) is linear and homogeneous of degree \( 0 \) is obvious.
	For \( x\in\g_\alpha \), we have
	\[
		\Phi(x(w\otimes e_{\gamma}))
		=\Phi(xw\otimes e_{\alpha+\gamma})
		=xw 
		=x\Phi(w\otimes e_{\gamma})
	\]
	which proves the claim.

	We now know that \( \Phi(W) \) is a \( \Gamma/H \)-graded submodule of \( V \).
	But since \( W\neq 0 \), 
	we must have that \( \Phi(W)\neq 0 \).
	Since \( V \) is \( \Gamma/H \)-graded irreducible, we must have that \( \Phi(W) = V \).

	Now, consider the module \( \Sigma_{h\in H} W^{+h} \subseteq \loopmodule{V}{H} \).
	In particular, consider the \( \gamma \)-sector
	\( \sum_{h\in H} W^{+h}_\gamma \) for some \( \gamma\in\Gamma \).
	Applying the map \( \Phi \), 
	\[
		\Phi\left(\sum_{h\in H}W^{+h}_\gamma\right)
		=\sum_{h\in H}\Phi\left(W^{+h}_\gamma\right)
		=\sum_{h\in H}\Phi\left(W_{\gamma+h}\right)
		= \Phi\left(\bigoplus_{h\in H} W_{\gamma+h}\right) = V_{\gamma+H}
	\]
	using the fact \( \Phi(W) = V \)
	(and recalling the embedding in \Cref{lem:Hparityshiftsubmodule}).
	By examining the definition of \( \Phi \), we find tha \( \sum_{h\in H}W^{+h}_\gamma = V_{\gamma+H}\otimes e_\gamma = \loopmodule{V}{H}_\gamma \).
	Therefore, \( \sum_{h\in H} W^{+h} = \loopmodule{V}{H} \).

	All that remains to show is that we can take a subset of \( H \) to obtain a direct sum.
	But this follows from \Cref{prop:sumsimpledirect}.
\end{proof}

\begin{thm}\label{cor:VorVHirred}
	Let \( V \) be \( \Gamma/H \)-graded irreducible.
	Every \( \Gamma \)-graded irreducible submodule of \( \loopmodule{V}{H} \) is isomorphic as a \( \Gamma/H \)-graded submodule to either \( \loopmodule{V}{H} \) or \( V \).
\end{thm}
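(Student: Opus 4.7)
The plan is to compare the $\Gamma$-graded decomposition of $\loopmodule{V}{H}$ provided by \Cref{thm:VHsemisimpleGmodule} with the $\Gamma/H$-graded decomposition provided by \Cref{thm:VHdirectsum}, exploiting that $|H|$ is prime since $H$ is a finite simple abelian group, so $H \iso \ZZ_p$ for some prime $p$.

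Let $W \subseteq \loopmodule{V}{H}$ be a $\Gamma$-graded irreducible submodule. By \Cref{thm:VHsemisimpleGmodule}, $\loopmodule{V}{H} = \bigoplus_{h \in K} W^{+h}$ for some $K \subseteq H$; and since each $W^{+h} \iso W$ as a $\Gamma/H$-graded $\g$-module (the third observation preceding that lemma), $\loopmodule{V}{H}$ is a direct sum of $|K|$ copies of $W$ as a $\Gamma/H$-graded module. At the same time, $W$ is a $\Gamma/H$-graded submodule of the $\Gamma/H$-completely reducible module $\loopmodule{V}{H}$ (combining \Cref{thm:VHdirectsum} with \Cref{prop:twistproperties}~\ref{prop:V-irred}), so by \Cref{prop:semisimplesubset} it is itself $\Gamma/H$-graded completely reducible with some finite length $m \geq 1$. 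Counting $\Gamma/H$-graded composition factors of $\loopmodule{V}{H}$ via the two descriptions and applying Jordan--H\"older yields $|K| \cdot m = |H| = p$.

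Primality then leaves two cases. If $|K| = 1$, then $\loopmodule{V}{H} = W^{+h}$ for a single $h \in H$, so $W \iso W^{+h} \iso \loopmodule{V}{H}$ as $\Gamma/H$-graded modules. If instead $|K| = p$, then $m = 1$, so $W$ is $\Gamma/H$-graded irreducible; the $\Gamma/H$-graded isomorphism $\bigoplus_{h \in K} W \iso \bigoplus_{f \in \pdualtwist} V^f$ combined with uniqueness of decomposition into irreducibles then forces every $V^f$ to be $\Gamma/H$-graded isomorphic to $W$, and taking $f$ to be the trivial character of $\pdualtwist$ gives $V \iso W$. The crucial step is the factorisation $|K|\cdot m = p$, and this is precisely where the simplicity hypothesis on $H$ is essential; without it, intermediate factorisations could admit further $\Gamma$-graded irreducible submodules beyond just $V$ and $\loopmodule{V}{H}$.
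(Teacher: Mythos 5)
Your proposal is correct and follows essentially the same route as the paper: compare the $\Gamma$-graded decomposition of \Cref{thm:VHsemisimpleGmodule} with the $\Gamma/H$-graded decomposition of \Cref{thm:VHdirectsum}, count irreducible summands to get $\card{K}\cdot m = \card{H} = p$, and split on primality. The only (harmless) variation is in the $m=1$ case, where you invoke uniqueness of the decomposition into irreducibles to identify $W$ with the trivially twisted $V$, whereas the paper notes that $W$ equips $V^f$ with a $\Gamma$-grading and applies \Cref{prop:twistproperties}~\ref{prop:V-=Vifgrade} to conclude $V^f\iso V$.
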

\begin{proof}
	Take \( W \) a \( \Gamma \)-graded irreducible submodule of \( \loopmodule{V}{H} \).
	By \Cref{thm:VHsemisimpleGmodule},
	\[
		\loopmodule{V}{H} = \bigoplus_{h\in K} W^{+h}.
	\]

	Since \( \loopmodule{V}{H} \) is a completely reducible \( \Gamma/H \)-graded module, we know that \( W \) is a completely reducible \( \Gamma/H \)-graded module with direct summands contained in those of \( \loopmodule{V}{H} \).
	By \Cref{thm:VHdirectsum}, \( W \) is a direct sum of character twists \( V^f \) (\( f\in\pdualtwist \)).
	Let \( n \) be the number of \( \Gamma/H \)-graded direct summands in \( W \).

	Then, the number of direct summands in \( \loopmodule{V}{H} \) as a \( \Gamma/H \)-graded module is equal to \( n\card{K} \).
	But, from \Cref{thm:VHdirectsum}, the number of direct summands is equal to \( \card{\pdualtwist} = \card{H} = p \).
	Therefore, \( n=1 \) or \( n=p \). 
	If \( n=p \) then \( W\iso \loopmodule{V}{H} \).
	If \( n=1 \) then \( W \) is isomorphic to \( V^f \) for some \( f\in\pdualtwist \).
	But then \( W \) gives \( V^f \) a \( \Gamma \)-grading,
	so \( W\iso V^f \iso V \) by \Cref{prop:twistproperties}~\ref{prop:V-=Vifgrade}.
\end{proof}

In combination with \Cref{thm:coarseirrep},
the above theorem tells us that every \( \Gamma \)-graded module appears as either a \( \Gamma/H \)-graded irreducible module or the loop module of one.
This result will greatly aid in the search for graded irreducible Lie colour algebra representations.
Noting that \( V \) is isomorphic to \( \loopmodule{V}{\{0\}} \),
\Cref{cor:VorVHirred} is a similar result as~\cite[Theorme~24]{MZ2018} (restated in~\Cref{thm:MZ2018}):
every \( \Gamma \)-graded module is a loop module.
However, there are two important distinctions between \Cref{cor:VorVHirred} and~\cite[Theorem~24]{MZ2018}:
restricting \( H \) to a simple group greatly restricts the possible loop modules,
and \( H \) does not rely on the choice of \( V \).

One difficulty remains: in the former case, finding a \( \Gamma \)-grade for a \( \Gamma/H \)-graded module can be difficult in general.
For some results about the gradability of modules see~\cite{EK2015}.
To aid with this, we prove a uniqueness result for the \( \Gamma \)-grade.

\subsubsection{Uniqueness of the \texorpdfstring{\( \Gamma \)}{Gamma}-grade}
It is possible for two modules to be non-isomorphic as \( \Gamma \)-graded modules but isomorphic as \( \Gamma/H \)-graded modules.
This is important to keep in mind when trying to find a \( \Gamma \)-grading for a \( \Gamma/H \)-graded irreducible module; there might be multiple \emph{different} \( \Gamma \)-gradings, all of which must be considered.
However, the following proposition gives a uniqueness result for the \( \Gamma \)-grade.

\begin{prop}\label{prop:gradinguniqueness}
	If \( V \) is a finite-dimensional \( \Gamma/H \)-graded irreducible \( \g \)-module, then any two \( \Gamma \)-gradings for \( V \) are equivalent up to \( H \)-parity shift.
\end{prop}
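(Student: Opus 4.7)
The plan is to exploit the two decompositions of \( \loopmodule{V}{H} \) available from \Cref{thm:VHsemisimpleGmodule} and compare them via a \( \Gamma \)-graded Schur argument.

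First, I would set up the machinery. Suppose \( V \) admits two \( \Gamma \)-gradings refining the given \( \Gamma/H \)-grading, and write \( V^{(1)}, V^{(2)} \) for \( V \) equipped with each. By \Cref{prop:gradesubrep}, each \( V^{(i)} \) embeds as a \( \Gamma \)-graded submodule \( W_i \subseteq \loopmodule{V}{H} \) via \( w\mapsto w\otimes e_{\gamma} \), and this embedding is a \( \Gamma \)-graded \( \g \)-isomorphism onto its image, so \( W_i\cong V^{(i)} \) as \( \Gamma \)-graded modules. Next, each \( W_i \) is \( \Gamma \)-graded irreducible: any \( \Gamma \)-graded submodule of \( W_i \) is automatically a \( \Gamma/H \)-graded submodule (by coarsening), and the \( \Gamma/H \)-structure on \( W_i \) coincides with that of \( V \), which is \( \Gamma/H \)-graded irreducible by hypothesis.

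Now apply \Cref{thm:VHsemisimpleGmodule} with the choice \( W = W_1 \) to obtain
\[
	\loopmodule{V}{H} = \bigoplus_{h\in K}W_1^{+h}
\]
for some \( K\subseteq H \). Dimension counting using \( \dim W_1^{+h} = \dim V \) and \( \dim\loopmodule{V}{H} = \card{H}\dim V \) forces \( \card{K} = \card{H} \), hence \( K = H \). Composing the inclusion \( W_2 \hookrightarrow \loopmodule{V}{H} \) with the projection \( \pi_h \) onto each summand \( W_1^{+h} \) produces degree-zero intertwiners \( f_h\colon W_2\to W_1^{+h} \). Since \( \sum_{h\in H} f_h \) is the (nonzero) inclusion, some \( f_h \) must be nonzero. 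Both \( W_2 \) and \( W_1^{+h} \) are \( \Gamma \)-graded irreducible, and the kernel and image of any degree-zero intertwiner are \( \Gamma \)-graded submodules, so the graded Schur lemma forces \( f_h \) to be a \( \Gamma \)-graded isomorphism. Hence \( V^{(2)}\cong W_2\cong W_1^{+h}\cong (V^{(1)})^{+h} \) as \( \Gamma \)-graded \( \g \)-modules.

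The main subtlety I anticipate is purely bookkeeping: verifying that the \( H \)-parity shift of the \emph{embedded} submodule \( W_1 \subseteq \loopmodule{V}{H} \) corresponds correctly to parity-shifting the external \( \Gamma \)-grading on \( V \) itself, so that the final isomorphism \( V^{(2)}\cong(V^{(1)})^{+h} \) really is a statement about the two gradings on the same underlying vector space \( V \). Everything else — the irreducibility of \( W_i \), the dimension count forcing \( K = H \), and the graded Schur step — is routine once the embeddings from \Cref{prop:gradesubrep} are in hand.
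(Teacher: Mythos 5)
Your proof is correct and takes essentially the same route as the paper: embed both gradings as \( \Gamma \)-graded irreducible submodules of \( \loopmodule{V}{H} \) via \Cref{prop:gradesubrep}, then use the decomposition of \Cref{thm:VHsemisimpleGmodule} into \( H \)-parity shifts to identify them. The paper's proof consists of exactly these two steps stated in two sentences; your dimension count and projection/Schur argument simply make explicit the identification of \( W_2 \) with one of the summands \( W_1^{+h} \), which the paper leaves implicit.
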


More precisely, this proposition tells us that if
\( V = \bigoplus_{\gamma\in\Gamma} V_{\gamma} \) and \( V = \bigoplus_{\gamma\in\Gamma} \widetilde{V}_{\gamma} \)
are two \( \Gamma \)-gradings for \( V \)
which both refine the \( \Gamma/H \) grading
(i.e.\ for each \( \Lambda\in \Gamma/H \) the original \( \Lambda \)-sector is \( \bigoplus_{\gamma\in\Lambda}V_{\gamma}=\bigoplus_{\gamma\in\Lambda}\widetilde{V}_{\gamma} \))
then there exists \( \eta\in H \) such that \( V^{+\eta}_\gamma = \widetilde{V}_\gamma \).

\begin{proof}
	Both \( V = \bigoplus_{\gamma\in\Gamma} V_{\gamma} \) and \( V = \bigoplus_{\gamma\in\Gamma} \widetilde{V}_{\gamma} \) appear as \( \Gamma \)-graded irreducible submodules of \( \loopmodule{V}{H} \) by \Cref{prop:gradesubrep}.
	Since all graded \( \Gamma \)-graded irreducible direct summands of \( \loopmodule{V}{H} \) are \( H \)-parity shifts of \( V \) by \Cref{thm:VHsemisimpleGmodule},
	the result follows.
\end{proof}

%
%

\subsection{The bijection}\label{sec:bijection}
Let \( \g=\bigoplus_{\gamma\in\Gamma}\g_{\gamma} \) be a Lie colour algebra and \( H\leq\Gamma \) be a finite simple group.
It is easily verified that twists by an element of \( \pdualtwist \) and \( H \)-parity shifts both define equivalence relations.
Let \( \ghirrep(\g) \) be the collection of equivalence classes of finite-dimensional \( \Gamma/H \)-graded irreducible \( \g \)-modules,
where two \( \Gamma/H \)-graded modules are equivalent up to a twist by a character in \( \pdualtwist \).
Let \( \girrep(\g) \) be the collection of equivalence classes 
of finite-dimensional \( \Gamma \)-graded irreducible \( \g \)-modules,
where two \( \Gamma \)-graded modules are equivalent up to \( H \)-parity shift.

\begin{thm}\label{thm:bijection}
	Define a function \( \mathcal{F}\colon\ghirrep\to\girrep \) by
	\begin{equation}\label{eq:bijection}
		\mathcal{F}(V) = 
		\begin{cases}
			V & \text{if \( V \) is \( \Gamma \)-gradable}\\
			\loopmodule{V}{H} & \text{otherwise}
		\end{cases}
	\end{equation}
	where \( V \) and \( \loopmodule{V}{H} \) are representatives of the equivalence classes.
	Then \( \mathcal{F} \) is well-defined (does not depend on the choice of representative \( V \)) and bijective.
\end{thm}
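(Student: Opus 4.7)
The plan is to split the theorem into three parts -- well-definedness, surjectivity, and injectivity -- and exploit the explicit \( \Gamma \)- and \( \Gamma/H \)-graded direct sum decompositions of the loop module from \Cref{thm:VHsemisimpleGmodule} and \Cref{thm:VHdirectsum}.

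First I would verify that the property ``\( V \) admits a \( \Gamma \)-grading refining its \( \Gamma/H \)-grading'' is invariant under twisting by any \( f\in\pdualtwist \): since \( V \) and \( V^f \) share the same underlying graded vector space and the two actions differ only by nonzero scalars, any \( \Gamma \)-grading for one works unchanged for the other. In the \( \Gamma \)-gradable case, \Cref{prop:twistproperties}~\ref{prop:V-=Vifgrade} supplies an explicit \( \Gamma \)-graded isomorphism \( V\iso V^f \), so \( \mathcal{F}(V)=V \) and \( \mathcal{F}(V^f)=V^f \) represent the same class in \( \girrep \). In the non-gradable case, I would apply \Cref{prop:twistproperties}~\ref{prop:VH-=V-H} to rewrite \( \mathcal{F}(V^f)=\loopmodule{V^f}{H}=(\loopmodule{V}{H})^f \) and construct the \( \Gamma \)-graded module isomorphism \( \Phi\colon\loopmodule{V}{H}\to(\loopmodule{V}{H})^f \) defined by \( \Phi(v\otimes e_\gamma)=f(\gamma)(v\otimes e_\gamma) \); homogeneity of degree zero is immediate, bijectivity follows because \( f(\gamma)\in\units{\FF} \), and the intertwining condition reduces to the character identity \( f(\alpha+\gamma)=f(\alpha)f(\gamma) \).

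For surjectivity, let \( W \) be a finite-dimensional \( \Gamma \)-graded irreducible \( \g \)-module. By \Cref{thm:coarseirrep}, \( W \) arises as a \( \Gamma \)-graded quotient of \( \loopmodule{V}{H} \) for some \( \Gamma/H \)-graded irreducible \( V \). Since \( \loopmodule{V}{H} \) is completely reducible as a \( \Gamma \)-graded module (\Cref{thm:VHsemisimpleGmodule}), \Cref{prop:semisimplesubset} realises \( W \) as an irreducible \( \Gamma \)-graded submodule \( W'\subseteq\loopmodule{V}{H} \). \Cref{cor:VorVHirred} then forces \( W' \) to be isomorphic, as a \( \Gamma/H \)-submodule, to either \( V \) or \( \loopmodule{V}{H} \). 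The first alternative equips \( V \) with a \( \Gamma \)-grading, and \Cref{thm:VHsemisimpleGmodule} describes \( \loopmodule{V}{H} \) as a direct sum of \( H \)-parity shifts of this \( V \), so \( W\iso W'\iso V^{+h} \) and \( \mathcal{F}(V)\sim W \). The second alternative forces \( W'=\loopmodule{V}{H} \) by dimension count, so \( \loopmodule{V}{H} \) is itself \( \Gamma \)-graded irreducible; combined with \Cref{thm:VHsemisimpleGmodule} this rules out \( V \) being \( \Gamma \)-gradable, hence \( \mathcal{F}(V)=\loopmodule{V}{H}\iso W \).

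For injectivity, suppose \( \mathcal{F}(V_1)\sim\mathcal{F}(V_2) \) via an \( H \)-parity shift; since such a shift preserves the underlying \( \Gamma/H \)-graded module, \( \mathcal{F}(V_1)\iso\mathcal{F}(V_2) \) as \( \Gamma/H \)-graded modules. The mixed case, where exactly one of \( V_1,V_2 \) is \( \Gamma \)-gradable, is ruled out because \( V_i \) is \( \Gamma/H \)-irreducible while \Cref{thm:VHdirectsum} decomposes \( \loopmodule{V_j}{H} \) into \( \card{H}=p\geq 2 \) summands. When both \( V_i \) are \( \Gamma \)-gradable, \( V_1\iso V_2 \) as \( \Gamma/H \)-graded modules, which is equivalence via the trivial twist. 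When neither is \( \Gamma \)-gradable, \Cref{thm:VHdirectsum} gives \( \bigoplus_f V_1^f\iso\bigoplus_f V_2^f \) as semisimple \( \Gamma/H \)-graded modules, and Krull--Schmidt uniqueness of the irreducible summands (valid in this finite-dimensional semisimple setting) supplies some \( f\in\pdualtwist \) with \( V_1\iso V_2^f \), i.e., \( V_1\sim V_2 \). The hard part will be the explicit intertwiner \( \Phi \) in the non-gradable well-definedness step and the Krull--Schmidt appeal in injectivity; both are more delicate than the essentially formal consequences of the preceding structural lemmas.
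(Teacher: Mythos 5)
Your proposal follows essentially the same route as the paper's proof: well-definedness via the twist properties of \Cref{prop:twistproperties}, surjectivity via \Cref{thm:coarseirrep}, \Cref{thm:VHsemisimpleGmodule} and \Cref{cor:VorVHirred}, and injectivity by comparing the $\Gamma/H$-graded decompositions from \Cref{thm:VHdirectsum} (your explicit intertwiner $v\otimes e_\gamma\mapsto f(\gamma)(v\otimes e_\gamma)$ is just the construction already contained in \Cref{prop:twistproperties}~\ref{prop:V-=Vifgrade}, and your Krull--Schmidt appeal is the paper's ``$V$ appears as a direct summand of $\mathcal{F}(\widetilde{V})$'' step). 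The only point worth adding is that well-definedness in the gradable case also requires the choice of $\Gamma$-grading on $V$ to be immaterial, which the paper settles by citing \Cref{prop:gradinguniqueness} (uniqueness of the $\Gamma$-grading up to $H$-parity shift).
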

\begin{proof}
	The \( \Gamma \)-gradability remains unchanged after applying an isomorphism or after twisting by \( f\in\pdualtwist \).
	And any \( \Gamma \)-grading of \( V \) is unique up to \( H \)-parity shift (\Cref{prop:gradinguniqueness}).
	Observe that \( \mathcal{F}(V^f) = (\mathcal{F}(V))^f \iso \mathcal{F}(V) \) (using \Cref{prop:twistproperties}~\ref{prop:VH-=V-H}),
	since the image of \( \mathcal{F} \) is \( \Gamma \)-graded in either case
	and all twists by \( \pdualtwist \) of a \( \Gamma \)-graded module are isomorphic (\Cref{prop:twistproperties}~\ref{prop:V-=Vifgrade}).
	We can hence deduce that \( \mathcal{F} \) is well defined.

	For injectivity, assume that \( \mathcal{F}(V) = \mathcal{F}(\widetilde{V}) \).
	Using \Cref{thm:VHdirectsum},
	we know that \( V \) is not \( \Gamma/H \)-graded isomorphic to \( \loopmodule{\widetilde{V}}{H} \),
	so \( V \) cannot be equivalent to \( \loopmodule{\widetilde{V}}{H} \) in \( \girrep \)
	(since a \( H \)-parity shift is a \( \Gamma/H \)-graded isomorphism).
	Thus, if \( \mathcal{F}(V) = V \) then \( \mathcal{F}(\widetilde{V}) = \widetilde{V} \).
	Similarly, 
	if \( \mathcal{F}(\widetilde{V}) = \widetilde{V} \) then \( \mathcal{F}(V) = V \).
	But if \( \mathcal{F}(V)=V \) and \( \mathcal{F}(\widetilde{V}) = \widetilde{V} \) are equivalent in \( \girrep \),
	then we must have \( V\iso \widetilde{V} \) as \( \Gamma/H \)-modules.
	That is, \( \mathcal{F} \) is injective in this case

	There is one remaining case to check for injectivity:
	\( \mathcal{F}(V) = \loopmodule{V}{H} \) and \( \mathcal{F}(\widetilde{V}) = \loopmodule{\widetilde{V}}{H}  \)
	In this case we appeal to \Cref{thm:VHdirectsum}
	to find that both \( \mathcal{F}(V) \) and \( \mathcal{F}(\widetilde{V}) \) are completely reducible as \( \Gamma/H \)-graded modules with each direct summand isomorphic to a twist by some \( f\in\pdualtwist \).
	In particular, \( V \) appears as a direct summand of \( \mathcal{F}(\widetilde{V}) \),
	so we must have \( V\iso \widetilde{V}^f \) for some \( f\in\pdualtwist \).
	That is, \( V \) and \( \widetilde{V} \) are equivalent in \( \ghirrep \) and thus \( \mathcal{F} \) is injective.

	For surjectivity, take an arbitrary finite dimensional \( \Gamma \)-graded irreducible module \( W \).
	By \Cref{thm:coarseirrep} and \Cref{thm:VHsemisimpleGmodule},
	this module appears as a submodule of the loop module of some \( \Gamma/H \)-graded irreducible module \( V \).
	By \Cref{cor:VorVHirred},
	the only \( \Gamma \)-graded submodules of \( \loopmodule{V}{H} \) are either \( \loopmodule{V}{H} \) itself
	or \( V \) with a suitable \( \Gamma \)-grading.
	If \( V \) can be given a \( \Gamma \)-grading then it appears as a \( \Gamma \)-graded submodule of \( \loopmodule{V}{H} \) (by \Cref{prop:gradesubrep}) making \( \loopmodule{V}{H} \) reducible.
	Thus, there is only one option (up to \( H \)-parity shift) for what \( W \) could have been.
	This only option is \( \mathcal{F}(V) \) and so \( \mathcal{F} \) is surjective.
\end{proof}

Recall that a similar bijection was noted in~\cite[Remark~7.2]{EK2017} via the use of Brauer invariants.
We did not make use of Bauer invariants,
but instead assumed that \( H \) is a simple group.
This assumption gave the bijection \( \mathcal{F} \) the very simple form~\eqref{eq:bijection}.

However, in general, it may be difficult to determine which case of \( \mathcal{F} \) in~\eqref{eq:bijection} applies.
There are multiple options we can test:
\begin{itemize}
	\item Try to find a \( \Gamma \)-grading for \( V \) (see~\cite{EK2015} for some results about \( \Gamma \)-gradability).
	\item Try to find \( \Gamma \)-graded proper submodules of \( \loopmodule{V}{H} \) (all of which will provide a \( \Gamma \)-grading for \( V \)) or else prove that \( \loopmodule{V}{H} \) is \( \Gamma \)-graded irreducible (and hence \( V \) will not be \( \Gamma \)-gradable).
	\item Check whether the twists of \( V \) by elements of \( \pdualtwist \) are isomorphic. If not, then \( V \) cannot be \( \Gamma \)-graded (by \Cref{prop:twistproperties}~\ref{prop:V-=Vifgrade}).
\end{itemize}

Taking \( \Gamma=\Ztzt,\, H = \{00,11\} \), the above theorem (in combination with discolouration\slash{}recolouration) gives a way to construct the graded irreducible modules of the \( \Ztzt \)-graded colour Lie superalgebras from the \( \Ztwo \)-graded Lie superalgebras.
More generally, we can use iterated applications of the theorem to produce all of the \( \Gamma \)-graded irreducible modules
for any finite abelian group \( \Gamma \).
\Cref{thm:bijection} also gives us a bijection between the graded and ungraded modules of the colour algebra.

\section{Example: colour \texorpdfstring{\( \spl[2] \)}{sl2}}\label{sec:exampleslc2}
As an example of applying \Cref{thm:bijection}, we will determine the finite-dimensional irreducible representations of \( \splc[2] \) over \( \CC \).
The algebra \( \splc[2] \) is a \( \Ztzt \)-graded Lie colour algebra with commutation factor \( \varepsilon(\alpha_1\alpha_2,\beta_1\beta_2) = (-1)^{\alpha_1\beta_2 - \alpha_2\beta_1} \) and sectors spanned by the following elements:
\begin{align*}
	00\text{-sector:}&~0  &
	10\text{-sector:}&~a_1  &
	01\text{-sector:}&~a_2  &
	11\text{-sector:}&~a_3
\end{align*}
and equipped with a bracket \( \cbrak{\cdot}{\cdot} \) which satisfies
\begin{align*}
	\cbrak{a_1}{a_2} &= a_3 & 
	\cbrak{a_2}{a_3} &= a_1 & 
	\cbrak{a_3}{a_1} &= a_2.
\end{align*}
Using the multiplier \( \sigma(\alpha_1\alpha_2,\beta_1\beta_2) = (-1)^{\alpha_2\beta_1} \),
\( \splc[2] \) discolours to a Lie algebra with Lie bracket \( \sbrak{\cdot}{\cdot} = \comm{\cdot}{\cdot} \) given by
\begin{align*}
	\comm{a_1}{a_2} &= a_3 & 
	\comm{a_2}{a_3} &= -a_1 & 
	\comm{a_3}{a_1} &= -a_2.
\end{align*}
These commutation relations are satisfied by the following realisation
\begin{align}
	a_1 &= \frac{i}{2}(e - f) & a_2 &= -\frac{1}{2}(e+f) & a_3 &= -\frac{i}{2}h. \label{eq:splc2ahef}
\end{align}
in terms of a standard basis \( \{h,e,f\} \) for \( \spl[2] \), (\(\comm{h}{e} = 2e,\, \comm{h}{f} = -2f,\, \comm{e}{f} = h\)). Consequently, the discolouration \( (\splc[2])^{\sigma} \) is isomorphic to \( \spl[2] \) (hence the notation).

Both the ungraded~\cite{CSVO2006} and graded~\cite{BS2022} irreducible representations have appeared in the literature.
In~\cite{CSVO2006}, the authors found all the ungraded irreducible representations for \( \splc[2] \)
by embedding the universal enveloping algebra \( \Uea(\splc[2]) \) in \( \Mat[2](\Uea(\spl[2])) \) (the space of \( 2\times 2 \) matrices with entries in \( \Uea(\spl[2]) \)).
In~\cite{BS2022}, the corresponding graded irreducible representations for the \( \Ztzt \)-graded (non-colour) \( \spl[2] \) were obtained using ideals generated by central elements of the universal enveloping algebra \( \Uea(\spl[2]) \).
The approach that we will take is presented in the more general context of the loop module construction.

Our strategy is as follows:
\begin{enumerate}
	\item find all of the ungraded irreducible representations for \( \spl[2] \);
	\item use the techniques from the preceding sections to find the \( \Ztwo \)-graded irreducible representations for \( \spl[2] \); 
	\item similarly, find the \( \Ztzt \)-graded irreducible representations for \( \spl[2] \);
	\item recolour to obtain the \( \Ztzt \)-graded irreducible representations for \( \splc[2] \);
	\item find the ungraded irreducible representations for \( \splc[2] \) as subrepresentations.
\end{enumerate}

\subsection{Irreducible representations for \texorpdfstring{\( \spl[2] \)}{sl2}}
Recall that \( \spl[2] \) has a unique irreducible representation \( V_{\lambda} \) of dimension \( \lambda + 1 \) for every highest weight \( \lambda\in\ZZ_{\geq 0} \).
We will choose a basis \( \{v_0,v_1,\ldots,v_{\lambda}\} \) for \( V_{\lambda} \) such that
\begin{align*}
	h v_j &= (\lambda - 2j) v_j & 
	e v_j &= (\lambda-j+1) v_{j-1} &
	f v_j &= (j+1) v_{j+1}
\end{align*}
with the convention \( v_{-1} = v_{\lambda+1} = 0 \).
In particular, note that \( v_0 \) is the highest weight vector.
Using~\eqref{eq:splc2ahef}, we find that
\begin{align*}
	a_1 v_j &= \frac{i}{2}((\lambda-j+1)v_{j-1} - (j+1)v_{j+1}) \\
	a_2 v_j &= -\frac{1}{2}((\lambda-j+1)v_{j-1} + (j+1)v_{j+1}) \\
	a_3 v_j &= -\frac{i}{2}(\lambda-2j) v_j.
\end{align*}

\subsection{\texorpdfstring{\( \Ztwo \)}{Z2}-graded irreducible representations}
We wish to find the \( \Ztwo \)-graded representations.
So, we choose \( \Gamma_1 = H_1 = \Ztwo \times \Ztwo/\{00,11\} \iso \Ztwo \).
Under this group and the original \( \splc[2] \) grading, the sectors for \( \spl[2] \) are spanned by the following elements:
\begin{align*}
	0\text{-sector:}&~a_3 && 1\text{-sector:}~a_1,a_2.
\end{align*}
The ungraded module \( V_{\lambda} \) can be given a \( \Gamma_1 \)-grading: \( V_{\lambda,0}^\mathrm{E}\oplus V_{\lambda,1}^\mathrm{E} \) where
\begin{align*}
	V_{\lambda,0}^\mathrm{E} &= \spn\{v_j\mid j~\text{even}\} &
	V_{\lambda,1}^\mathrm{E} &= \spn\{v_j\mid j~\text{odd}\}.
\end{align*}
We can easily check that \( a_1 V_{\lambda,\gamma}^\mathrm{E} \subseteq V_{\lambda,\gamma+1}^\mathrm{E},\, a_2V_{\lambda,\gamma}^\mathrm{E} \subseteq V_{\lambda,\gamma+1}^\mathrm{E} \) and \( a_3 V_{\lambda,\gamma}^\mathrm{E} \subseteq V_{\lambda,\gamma}^\mathrm{E} \) for \( \gamma \in \Ztwo \).
Call this \( \Gamma_1 \)-graded module \( V_{\lambda}^\mathrm{E} \).

Since \( V_{\lambda} \) is ungraded irreducible,
\( V_{\lambda}^E \) is \( \Gamma_1 \)-graded irreducible.
By shifting the \( H_1 \)-parity of \( V_{\lambda}^\mathrm{E} \),
we get another \( \Gamma_1 \)-graded irreducible module \( V_{\lambda}^\mathrm{O} = V_{\lambda,0}^\mathrm{O} \oplus V_{\lambda,1}^\mathrm{O} \) where
\begin{align*}
	V_{\lambda,0}^\mathrm{O} &= \spn\{v_j\mid j~\text{odd}\} &
	V_{\lambda,1}^\mathrm{O} &= \spn\{v_j\mid j~\text{even}\}.
\end{align*}
By \Cref{prop:gradinguniqueness}, we do not need to search for any other gradings.
And by \Cref{prop:gradesubrep},
we do not need to check the loop module at all.
Thus, the finite-dimensional \( \Gamma_1 \)-graded irreducible \( \spl[2] \)-modules are \( V_{\lambda}^\mathrm{E} \) and \( V_{\lambda}^\mathrm{O} \) for \( \lambda\in\ZZ_{\geq 0} \).

\subsection{\texorpdfstring{\( \Ztzt \)}{Z2xZ2}-graded irreducible representations}
Now we wish to find the \( \Ztzt \)-graded representations.
So, we choose \( \Gamma_2 = \Ztzt \) and \( H_2 = \{00,11\} \).
We will now find the \( \Gamma_2 \)-gradings, with the aid of the following lemma.

\begin{lem}\label{lem:v0vl}
	Let \( V_{\lambda} = \bigoplus_{\gamma\in\Gamma_2}W_{\gamma} \) be a \( \Gamma_2 \)-grading for \( V_{\lambda} \).
	Then \( v_0 = \sum_{\gamma\in\Gamma_2} v_{0,\gamma} \) where \( v_{0,\gamma}\in W_{\gamma} \) is some linear combination of \( v_0 \) and \( v_{\lambda} \) (possibly equal to zero).
\end{lem}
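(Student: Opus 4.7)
The plan is to exploit the fact that $a_3 = -\tfrac{i}{2}h$ lies in the $\Ztzt$-sector of grade $11$, so the $\Gamma_2$-graded structure of $V_\lambda$ completely pins down how the homogeneous components of $v_0$ interact under $a_3$. Concretely, writing $v_0 = \sum_{\gamma\in\Gamma_2} v_{0,\gamma}$ with $v_{0,\gamma}\in W_\gamma$, the grade-shifting property forces $a_3 v_{0,\gamma}\in W_{\gamma+11}$ for each $\gamma$.

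Next I would apply $a_3$ to both sides of this decomposition and compare with $a_3 v_0 = -\tfrac{i\lambda}{2} v_0$. Matching the $W_{\gamma+11}$-components gives $a_3 v_{0,\gamma} = -\tfrac{i\lambda}{2}\,v_{0,\gamma+11}$ for every $\gamma\in\Gamma_2$. Iterating once more and using that $11 + 11 = 00$ in $\Ztzt$ yields $a_3^2 v_{0,\gamma} = -\tfrac{\lambda^2}{4}\,v_{0,\gamma}$, so each $v_{0,\gamma}$ is an eigenvector of $a_3^2$ with eigenvalue $-\tfrac{\lambda^2}{4}$.

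The last step is to identify this eigenspace. Since $a_3^2 = -\tfrac{1}{4}h^2$ acts diagonally on the weight basis $\{v_0,\ldots,v_\lambda\}$ with eigenvalues $-\tfrac{(\lambda-2j)^2}{4}$, the $-\tfrac{\lambda^2}{4}$-eigenspace is spanned precisely by those $v_j$ for which $(\lambda-2j)^2 = \lambda^2$, i.e.\ $j=0$ and $j=\lambda$. Hence each $v_{0,\gamma}$ lies in $\spn\{v_0,v_\lambda\}$, as claimed (and possibly equals $0$ when the $(\Gamma_2/H_2)$-grade of $v_0$ rules that sector out).

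This argument is light on obstacles; the main subtlety is the index arithmetic in $\Ztzt$ when matching homogeneous components, and the observation that $11+11=0$ is exactly what lets $a_3^2$ preserve each $W_\gamma$. The induced $(\Gamma_2/H_2)$-grading on $V_\lambda$ must agree (up to $H_1$-parity shift) with either $V_\lambda^{\mathrm{E}}$ or $V_\lambda^{\mathrm{O}}$, which in turn forces two of the four $v_{0,\gamma}$ to vanish, but this refinement is not needed for the lemma itself since the eigenspace calculation handles all $\gamma$ uniformly (with $\lambda=0$ being the trivial degenerate case where $v_0=v_\lambda$).
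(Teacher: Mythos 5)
Your proposal is correct and follows essentially the same route as the paper: both arguments rest on the observation that $h^2$ (equivalently $a_3^2$) preserves each sector $W_\gamma$ because $a_3$ is homogeneous of degree $11$ and $11+11=00$, so each homogeneous component $v_{0,\gamma}$ is an eigenvector of $h^2$ with eigenvalue $\lambda^2$, and that eigenspace is exactly $\spn\{v_0,v_\lambda\}$. Your intermediate relation $a_3 v_{0,\gamma} = -\tfrac{i\lambda}{2}v_{0,\gamma+11}$ is just a slightly more explicit unpacking of the paper's one-line statement that $h^2 W_\gamma\subseteq W_\gamma$.
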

\begin{proof}
	Since \( h = 2ia_3 \) is homogeneous of degree \( 11 \),
	we find that \( h^2 W_{\gamma} \subseteq W_{\gamma} \).
	If we project \( v_0 \) onto each sector and write \( v_0 = \sum_{\gamma\in\Gamma_2} v_{0,\gamma} \),
	for \( v_{0,\gamma}\in W_{\gamma} \) then we find that
	\[
		\sum_{\gamma\in\Gamma_2}h^2v_{0,\gamma} = h^2(v_0) = \sum_{\gamma\in\Gamma_2} \lambda^2 v_{0,\gamma}.
	\]
	In particular, \( h^2v_{0,\gamma} = \lambda^2 v_{0,\gamma} \) for every \( \gamma\in\Gamma_2 \).
	Using our knowledge of the eigenvectors of \( h \),
	we find that \( v_{0,\gamma} \) must be a linear combination of \( v_0 \) and \( v_{\lambda} \).
\end{proof}
\begin{rmk}
	A similar result holds for \( v_{\lambda} = \sum_{\gamma\in\Gamma_2}v_{\lambda,\gamma} \).
\end{rmk}
\Cref{lem:v0vl} tells us that at least one sector of any \( \Gamma_2 \)-grading must contain a nonzero linear combination of \( v_0 \) and \( v_{\lambda} \).
Our strategy is to then apply \( a_1 \) and \( a_2 \) to this linear combination to help deduce the structure of a \( \Gamma_2 \)-grading or prove that one cannot exist.

\subsubsection{The \texorpdfstring{\( \lambda \)}{lambda} even case}
If \( \lambda \) is even, then both \( v_0 \) and \( v_{\lambda} \) are in \( V_{\lambda,0}^{\mathrm{E}} \).
Therefore, using \Cref{lem:v0vl}, we find that the \( 00 \)-sector or \( 11 \)-sector must contain a nonzero linear combination of \( v_0 \) and \( v_{\lambda} \).
We choose (after some trial and error) \( v_0 + v_{\lambda} \) to be in the \( 00 \)-sector and \( v_0 - v_{\lambda} \) to be in the \( 11 \)-sector.
With this choice, we can find a \( \Gamma_2 \)-grading for \( V_{\lambda}^\mathrm{E} \):
\begin{equation}\label{eq:VlE+sectors}
	\begin{aligned}
		V_{\lambda,00}^{\mathrm{E}+} &= \spn\{v_j + v_{\lambda-j}\mid j~\text{even}\}, &
		V_{\lambda,10}^{\mathrm{E}+} &= \spn\{v_j - v_{\lambda-j}\mid j~\text{odd}\}, \\
		V_{\lambda,01}^{\mathrm{E}+} &= \spn\{v_j + v_{\lambda-j}\mid j~\text{odd}\}, &
		V_{\lambda,11}^{\mathrm{E}+} &= \spn\{v_j - v_{\lambda-j}\mid j~\text{even}\}.
	\end{aligned}
\end{equation}
Denote this \( \Gamma_2 \)-graded module by \( V_{\lambda}^{\mathrm{E}+} \).
We calculate
\begin{align*}
	a_1(v_j \pm v_{\lambda-j}) &= \frac{i}{2}((\lambda-j+1)(v_{j-1} \mp v_{\lambda-(j-1)}) - (j+1)(v_{j+1}\mp v_{\lambda-(j+1)}) \\
	a_2(v_j \pm v_{\lambda-j}) &= -\frac{1}{2}((\lambda-j+1)(v_{j-1} \pm v_{\lambda-(j-1)}) + (j+1)(v_{j+1}\pm v_{\lambda-(j+1)}) \\
	a_3(v_j \pm v_{\lambda-j}) &= -\frac{i}{2}((\lambda-2j)(v_j \mp v_{\lambda-j}))
\end{align*}
from which we can easily verify that \( a_1 V_{\lambda,\gamma}^{\mathrm{E}+} \subseteq V_{\lambda,\gamma + 10}^{\mathrm{E}+} \), \( a_2 V_{\lambda,\gamma}^{\mathrm{E}+}\subseteq V_{\lambda,\gamma + 01}^{\mathrm{E}+} \) and \( a_3 V_{\lambda,\gamma}^{\mathrm{E}+} \subseteq V_{\lambda,\gamma + 11}^{\mathrm{E}+} \) for each \( \gamma\in\Gamma_2 \).
Since \( V_{\lambda}^\mathrm{E} \) is \( \Gamma_2 \)-graded irreducible, so is \( V_{\lambda}^{\mathrm{E}+} \).
By shifting the \( H_2 \)-parity of \( V_{\lambda}^{\mathrm{E}+} \),
we get another \( \Gamma_2 \)-graded irreducible module \( V_{\lambda}^{\mathrm{E}-} \),
which can be obtained from \( V_{\lambda}^{\mathrm{E}+} \) by swapping the \( 00 \)- with the \( 11 \)-sector and the \( 10 \)- with the \( 01 \)-sector.
Note that \( V_{\lambda}^{\mathrm{E}+} \) and \( V_{\lambda}^{\mathrm{E}-} \) are not isomorphic,
because \( v_{\lambda/2} = (1/2)(v_{\lambda/2} + v_{\lambda - \lambda/2}) \)  (the unique vector with weight \( 0 \)) will be in different sectors.

Performing similar computations with \( V_{\lambda}^\mathrm{O} \),
we find two more \( \Gamma_2 \)-graded irreducible modules \( V_{\lambda}^{\mathrm{O}+} \) and \( V_{\lambda}^{\mathrm{O}-} \) (these can be obtained from \( V_{\lambda}^{\mathrm{E}+} \) by swapping the \( 00 \)- with the \( 01 \)-sector and the \( 10 \)- with the \( 11 \)-sector to get \( V_{\lambda}^{\mathrm{O}+} \);
and by swapping the \( 00 \)- with the \( 10 \)-sector and the \( 01 \)- with the \( 11 \)-sector to get \( V_{\lambda}^{\mathrm{O}-} \)).
By \Cref{prop:gradinguniqueness} and \Cref{thm:bijection} we do not need to search for any more \( \Gamma_2 \)-graded irreducible modules in the case where \( \lambda \) is even.

\subsubsection{The \texorpdfstring{\( \lambda \)}{lambda} odd case}
\begin{lem}
	If \( \lambda \) is odd, then \( V_{\lambda}^{\mathrm{E}} \) cannot be given a \( \Gamma_2 \)-grading.
\end{lem}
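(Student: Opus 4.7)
The plan is to argue by contradiction: suppose $V_\lambda^{\mathrm{E}}$ admits a $\Gamma_2$-grading $V_\lambda = \bigoplus_{\gamma\in\Gamma_2} W_\gamma$ that refines the $\Gamma_1$-grading (so $V_{\lambda,0}^{\mathrm{E}} = W_{00}\oplus W_{11}$ and $V_{\lambda,1}^{\mathrm{E}} = W_{01}\oplus W_{10}$). The strategy is to first show that the highest-weight vector $v_0$ is forced to be $\Gamma_2$-homogeneous, and then derive a contradiction by comparing the $\Gamma_2$-degrees of $a_1 v_0$ and $a_2 v_0$, both of which are nonzero multiples of $v_1$.

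For the first step, I will apply \Cref{lem:v0vl} to decompose $v_0 = \sum_{\gamma\in\Gamma_2} v_{0,\gamma}$ with each $v_{0,\gamma} \in W_\gamma$ lying in $\spn\{v_0, v_\lambda\}$. Since the $\Gamma_2$-grading refines the $\Gamma_1$-grading and $v_0 \in V_{\lambda,0}^{\mathrm{E}}$, the projections $v_{0,01}$ and $v_{0,10}$ vanish. Because $\lambda$ is odd, we have $v_0 \in V_{\lambda,0}^{\mathrm{E}}$ but $v_\lambda \in V_{\lambda,1}^{\mathrm{E}}$, so
\[
\spn\{v_0,v_\lambda\}\cap V_{\lambda,0}^{\mathrm{E}} = \CC v_0.
\]
Thus $v_{0,00}$ and $v_{0,11}$ are both scalar multiples of $v_0$, and since they sum to $v_0$, exactly one is $v_0$ and the other is zero. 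In other words, $v_0$ is $\Gamma_2$-homogeneous of some degree $\delta_0 \in \{00,11\}$.

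For the second step, I will compute directly
\[
a_1 v_0 = -\tfrac{i}{2} v_1, \qquad a_2 v_0 = -\tfrac{1}{2} v_1,
\]
both of which are nonzero since $\lambda \geq 1$. Because $a_1$ has $\Gamma_2$-degree $10$ and $a_2$ has $\Gamma_2$-degree $01$, the first equation forces $v_1 \in W_{\delta_0 + 10}$ while the second forces $v_1 \in W_{\delta_0 + 01}$. Since $\delta_0 + 10 \neq \delta_0 + 01$ in $\Gamma_2$, the only way both conditions can hold is $v_1 = 0$, which is a contradiction.

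The argument is short and I do not expect a serious obstacle; the only subtlety is making sure the refinement hypothesis is used to cut down $v_{0,\gamma}$ from an arbitrary element of $\spn\{v_0,v_\lambda\}$ to a multiple of $v_0$ alone, which is precisely where the parity of $\lambda$ is used. Without the oddness of $\lambda$, one could mix $v_0$ and $v_\lambda$ within a single sector (as was done to build $V_\lambda^{\mathrm{E}+}$ in the even case), and no contradiction would arise.
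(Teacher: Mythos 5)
Your proof is correct, and while its first half coincides with the paper's (using \Cref{lem:v0vl} together with the oddness of \( \lambda \) to force \( v_0 \) into a single sector \( W_{\delta_0} \) with \( \delta_0\in\{00,11\} \)), your endgame is genuinely different and considerably shorter. The paper proceeds by locating \( v_\lambda \) as well, then runs a four-case analysis with an inductive propagation argument (repeatedly applying \( a_1 \) or \( a_2 \) to show the whole module would collapse into \( W_{\delta_0}\oplus W_{\delta_0+10} \) starting from \( v_0 \), and into the complementary pair starting from \( v_\lambda \)). You instead exploit the highest-weight property: since the \( v_{-1} \)-terms vanish, \( a_1v_0=-\tfrac{i}{2}v_1 \) and \( a_2v_0=-\tfrac{1}{2}v_1 \) are proportional, so the nonzero vector \( v_1 \) would have to lie in both \( W_{\delta_0+10} \) and \( W_{\delta_0+01} \), which is an immediate contradiction with no case analysis or induction. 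What the paper's longer route buys is structural information (how a hypothetical grading would have to distribute the whole basis), but for the statement at hand your argument is cleaner. One small point to make explicit: when you conclude from \( v_{0,00},v_{0,11}\in\CC v_0 \) and \( v_{0,00}+v_{0,11}=v_0 \) that ``exactly one is \( v_0 \) and the other is zero,'' you are implicitly using that \( W_{00}\cap W_{11}=0 \) (otherwise both could be nonzero multiples of \( v_0 \), which would place \( v_0 \) in both sectors); this is trivially available from the directness of the sum, but it is the step doing the work there.
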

\begin{proof}
	For contradiction, assume that we can give \( V_{\lambda}^{\mathrm{E}} \) a \( \Gamma_2 \)-grading, \( V_{\lambda}^{\mathrm{E}} = \bigoplus_{\gamma\in\Gamma_2} V_{\lambda,\gamma}^{\mathrm{E}} \).
	From the structure of \( V_{\lambda}^{\mathrm{E}} \), we know that \( v_0 \in V_{\lambda,0}^{\mathrm{E}} = V_{\lambda,00}^{\mathrm{E}} \oplus V_{\lambda,11}^{\mathrm{E}} \).
	Therefore, we can write \( v_0 = v_{0,00} + v_{0,11} \) for \( v_{0,00}\in V_{\lambda,00}^{\mathrm{E}},\, v_{0,11}\in V_{\lambda,11}^{\mathrm{E}} \).
	By \Cref{lem:v0vl}, both \( v_{0,00} \) and \( v_{0,11} \) are linear combinations of \( v_0 \) and \( v_{\lambda} \).
	But since \( v_{\lambda} \in V_{\lambda,1}^{\mathrm{E}} = V_{\lambda,10}^{\mathrm{E}} \oplus V_{\lambda,01}^{\mathrm{E}} \), we must have that both \( v_{0,00} \) and \( v_{0,11} \) are scalar multiples of \( v_0 \).
	And since they sum to \( v_0 \neq 0 \), at least one of them must be nonzero.
	That is, \( v_0\in V_{\lambda,00}^{\mathrm{E}} \) or \( v_0 \in V_{\lambda,11}^{\mathrm{E}} \).
	An analogous argument shows that \( v_{\lambda}\in V_{\lambda,10} \) or \( v_{\lambda}\in V_{\lambda,01} \).

	If \( v_0 \in V_{\lambda,00}^{\mathrm{E}} \) and \( v_{\lambda} \in V_{\lambda,01}^{\mathrm{E}} \) 
	then applying \( a_1 \) we find
	\begin{align*}
		v_1 &= 2ia_1 v_0 \in V_{\lambda,00}^{\mathrm{E}} \oplus V_{\lambda,10}^{\mathrm{E}} \\
		v_2 &= ia_1 v_1 + \frac{\lambda}{2} v_0   \in V_{\lambda,00}^{\mathrm{E}} \oplus V_{\lambda,10}^{\mathrm{E}} \\
		    &\ \vdots \\
		v_{j+1} &= \frac{2i}{j+1} a_1 v_j + \frac{\lambda-j+1}{j+1} v_{j-1} \in V_{\lambda,00}^{\mathrm{E}} \oplus V_{\lambda,10}^{\mathrm{E}}.
	\end{align*}
	Proceeding inductively, we find that \( V_{\lambda}^{\mathrm{E}}\subseteq V_{\lambda,00}^{\mathrm{E}} \oplus V_{\lambda,10}^{\mathrm{E}} \).
	But, by applying \( a_1 \) to \( v_{\lambda} \) in a similar way, we find that \( V_{\lambda}^{\mathrm{E}}\subseteq V_{\lambda,01}^{\mathrm{E}} \oplus V_{\lambda,11}^{\mathrm{E}} \), a contradiction!
	We similarly arrive at a contradiction if we assume \( v_0\in V_{\lambda,11}^{\mathrm{E}} \) and \( v_{\lambda} \in V_{\lambda,10}^{\mathrm{E}} \); \( v_0 \in V_{\lambda,00}^{\mathrm{E}} \) and \( v_{\lambda} \in V_{\lambda,10}^{\mathrm{E}} \); or \( v_0\in V_{\lambda,11} \) and \( v_{\lambda} \in V_{\lambda,01} \).
	(Note that in some of these cases we need to use \( a_2 \) instead of \( a_1 \).)
\end{proof}

We have just shown that \( V_{\lambda}^{\mathrm{E}} \) cannot be given a \( \Gamma_2 \)-grading if \( \lambda \) is odd.
By \Cref{cor:VorVHirred}, the loop module \( \loopmodule{V_{\lambda}^{\mathrm{E}}}{H_2} \) must therefore be a \( \Gamma_2 \)-graded irreducible representation.
If we take a basis \( \{v_{\alpha,j} \mid \alpha\in H_2\iso\Ztwo,\, j =0,\ldots,\lambda\} \) for \( \loopmodule{V_{\lambda}^{\mathrm{E}}}{H_2} \), where
\begin{equation}\label{eq:VEH2sectors}
	\begin{aligned}
		\loopmodule{V^{\mathrm{E}}_\lambda}{H_2}_{00} &= \spn\{v_{0,j}\mid j~\text{even}\}, &
		\loopmodule{V^{\mathrm{E}}_\lambda}{H_2}_{10} &= \spn\{v_{1,j}\mid j~\text{odd}\}, \\
		\loopmodule{V^{\mathrm{E}}_\lambda}{H_2}_{01} &= \spn\{v_{0,j}\mid j~\text{odd}\}, &
		\loopmodule{V^{\mathrm{E}}_\lambda}{H_2}_{11} &= \spn\{v_{1,j}\mid j~\text{even}\},
	\end{aligned}
\end{equation}
then we can compute the action as
\begin{align*}
	a_1 v_{\alpha,j} &= \frac{i}{2}((\lambda-j+1)v_{\alpha+1,j-1} - (j+1)v_{\alpha+1,j+1}) \\
	a_2 v_{\alpha,j} &= -\frac{1}{2}((\lambda-j+1)v_{\alpha,j-1} + (j+1)v_{\alpha,j+1}) \\
	a_3 v_{\alpha,j} &= -\frac{i}{2}(\lambda - 2j)v_{\alpha+1,j}.
\end{align*}

By similar reasoning, we find that \( \loopmodule{V_{\lambda}^{\mathrm{O}}}{H_2} \) is a \( \Gamma_2 \)-graded irreducible representation.
We can obtain \( \loopmodule{V_{\lambda}^{\mathrm{O}}}{H_2} \) from \( \loopmodule{V_{\lambda}^{\mathrm{E}}}{H_2} \) by swapping the \( 00 \)- with the \( 01 \)-sector and the \( 10 \)- with the \( 11 \)-sector.
Note that if we shift the \( H_2 \)-parity of either of these modules, we get the same module.
By \Cref{thm:bijection}, the only \( \Gamma_2 \)-graded irreducible modules for the case where \( \lambda \) is odd are \( \loopmodule{V_{\lambda}^{\mathrm{E}}}{H_2} \) and \( \loopmodule{V_{\lambda}^{\mathrm{O}}}{H_2} \).

\subsection{Recolouring}
To recolour \( \spl[2] \) to \( \splc[2] \) we use the multiplier \( \sigma(\alpha_1\alpha_2,\beta_1\beta_2) = (-1)^{\alpha_2\beta_1} \).
To obtain the \( \Ztzt \)-graded irreducible modules for \( \splc[2] \), we use this same multiplier for the \( \Ztzt \)-graded irreducible modules for \( \spl[2] \).

\begin{thm}\label{thm:slc2Ztztirreps}
	The only finite-dimensional \( \Ztzt \)-graded irreducible modules for \( \splc[2] \) are:
	\begin{enumerate}[label=(\roman*)]
		\item 
			For each even number \( \lambda \),
			the four modules which are equivalent to \( (V_{\lambda}^{\mathrm{E}+})^{\sigma} \) up to \( \Ztzt \)-parity shift.
			The sectors of \( (V_{\lambda}^{\mathrm{E}+})^{\sigma} \) are given as in \eqref{eq:VlE+sectors} and the action is given by
			\begin{align*}
				a_1(v_j \pm v_{\lambda-j}) &= \frac{i}{2}((\lambda-j+1)(v_{j-1} \mp v_{\lambda-(j-1)}) - (j+1)(v_{j+1}\mp v_{\lambda-(j+1)}) \\
				a_2(v_j \pm v_{\lambda-j}) &= \mp\frac{1}{2}((\lambda-j+1)(v_{j-1} \pm v_{\lambda-(j-1)}) + (j+1)(v_{j+1}\pm v_{\lambda-(j+1)}) \\
				a_3(v_j \pm v_{\lambda-j}) &= \mp\frac{i}{2}((\lambda-2j)(v_j \mp v_{\lambda-j})).
			\end{align*}
			The dimension of these modules is \( \lambda+1 \).
		\item 
			For each odd number \( \lambda \),
			the two modules which are equivalent to \( (\loopmodule{V_{\lambda}^{\mathrm{E}}}{H_2})^{\sigma} \) up to \( \{00,01\} \)-parity shift.
			The sectors of \( (\loopmodule{V_{\lambda}^{\mathrm{E}}}{H_2})^{\sigma} \) are given as in \eqref{eq:VEH2sectors} and the action is given by
			\begin{align*}
				a_1 v_{\alpha,j} &= \frac{i}{2}((\lambda-j+1)v_{\alpha+1,j-1} - (j+1)v_{\alpha+1,j+1}) \\
				a_2 v_{\alpha,j} &= -\frac{(-1)^{\alpha}}{2}((\lambda-j+1)v_{\alpha,j-1} + (j+1)v_{\alpha,j+1}) \\
				a_3 v_{\alpha,j} &= -\frac{i(-1)^{\alpha}}{2}(\lambda - 2j)v_{\alpha+1,j}.
			\end{align*}
			The dimension of these modules is \( 2(\lambda+1) \).
	\end{enumerate}
\end{thm}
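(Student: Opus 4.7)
The preceding subsections have already enumerated, via \Cref{thm:bijection}, all finite-dimensional \( \Gamma_2 \)-graded irreducible \( \spl[2] \)-modules: for even \( \lambda \) these are the four pairwise \( \Gamma_2 \)-graded non-isomorphic modules \( V_{\lambda}^{\mathrm{E}+}, V_{\lambda}^{\mathrm{E}-}, V_{\lambda}^{\mathrm{O}+}, V_{\lambda}^{\mathrm{O}-} \), and for odd \( \lambda \) they are \( \loopmodule{V_{\lambda}^{\mathrm{E}}}{H_2} \) and \( \loopmodule{V_{\lambda}^{\mathrm{O}}}{H_2} \). Since the correspondence \( \rho\mapsto\rho^\sigma \) of \Cref{sec:prelim} is a bijection between the \( \Gamma_2 \)-graded representations of a Lie colour algebra and its discolouration which preserves irreducibility, submodules, and hence \( H_2 \)-parity shift equivalence, the plan is simply to recolour each representative above and read off the resulting action on \( \splc[2] \).

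To compute this action I would use \( \rho(x)v = \sigma(\alpha,\gamma)^{-1}\rho^\sigma(x)v \) for \( x\in\splc[2]_\alpha \) and \( v\in V_\gamma \). Because \( \sigma(\alpha_1\alpha_2,\gamma_1\gamma_2)=(-1)^{\alpha_2\gamma_1} \) takes values in \( \{\pm 1\} \), its inverse coincides with itself. Consequently \( a_1 \) (degree \( 10 \), \( \alpha_2=0 \)) acts on the recoloured module exactly as on the corresponding \( \spl[2] \)-module, whereas \( a_2 \) (degree \( 01 \)) and \( a_3 \) (degree \( 11 \)) each pick up a factor of \( (-1)^{\gamma_1} \), where \( \gamma_1\in\ZZ_2 \) is the first component of the grade of the input vector.

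The two families of formulas then follow by inspection of the sector data. In the even-\( \lambda \) case, \eqref{eq:VlE+sectors} shows that \( v_j+v_{\lambda-j} \) always has \( \gamma_1=0 \) and \( v_j-v_{\lambda-j} \) always has \( \gamma_1=1 \), regardless of the parity of \( j \); substituting the corresponding factors \( \pm 1 \) into the \( \spl[2] \)-action displayed just before \eqref{eq:VlE+sectors} produces the \( \mp \) signs in the stated formulas for \( a_2 \) and \( a_3 \), while \( a_1 \) is unchanged. The dimension \( \lambda+1 \) and the four \( \Ztzt \)-parity-shift classes transfer directly from the \( \spl[2] \) classification. In the odd-\( \lambda \) case, \eqref{eq:VEH2sectors} gives \( \gamma_1=\alpha \) for \( v_{\alpha,j} \), so the \( (-1)^{\gamma_1} \) factor becomes \( (-1)^\alpha \) and yields the stated formulas; the two (rather than four) parity-shift classes reflect the earlier observation that any \( H_2 \)-parity shift of a loop module is isomorphic to the loop module itself, and the dimension \( 2(\lambda+1) \) is that of \( \loopmodule{V_\lambda^{\mathrm{E}}}{H_2} \). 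The main obstacle is purely bookkeeping---correctly tracking \( \gamma_1 \) for each homogeneous basis vector---since the conceptual content is already supplied by \Cref{thm:bijection} together with Scheunert's recolouration bijection.
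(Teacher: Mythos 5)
Your proposal is correct and follows essentially the same route as the paper: the classification is inherited from the preceding subsections (via \Cref{prop:gradinguniqueness}, \Cref{cor:VorVHirred} and \Cref{thm:bijection}), and the recolouration amounts to multiplying the action of \( a_2 \) and \( a_3 \) by \( (-1)^{\gamma_1} \) while leaving \( a_1 \) unchanged, since \( \sigma(10,\beta)=1 \) and \( \sigma(01,\beta)=\sigma(11,\beta)=(-1)^{\beta_1} \). Your sector-by-sector bookkeeping of \( \gamma_1 \) in \eqref{eq:VlE+sectors} and \eqref{eq:VEH2sectors} reproduces exactly the signs the paper records.
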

\begin{proof}
	Use the results from the previous sections and the facts that \( \sigma(10,\beta) = 1 \) for all \( \beta\in\Ztzt \) and 
	\[
		\sigma(01,\beta) = \sigma(11,\beta) =
		\begin{cases}
			-1 & \text{if}~\beta=10,11\\
			1 & \text{if}~\beta=00,01.
		\end{cases}
		\qedhere
	\]
\end{proof}

Note that the irreducible modules in the above theorem are recoloured versions of the torsion-free modules previously appearing in~\cite{BS2022}.
The above theorem proves, in addition, that these modules are the \emph{only} \( \Ztzt \)-graded irreducible \( \splc[2] \) modules.

Note that, for each \( \lambda\in\ZZ_{\geq0} \),
there is a unique equivalence class of \( \Ztzt \)-graded irreducible \( \splc[2] \)-modules.
This corresponds exactly to the unique irreducible \( \spl[2] \)-modules,
demonstrating the bijection in \Cref{thm:bijection}.

\subsection{Ungraded irreducible representations for \texorpdfstring{\( \splc[2] \)}{slc2}}
\begin{lem}
	The ungraded irreducible representations for \( \splc[2] \)
	all appear as ungraded subrepresentations of the \( \Ztzt \)-graded irreducible representations in \Cref{thm:slc2Ztztirreps}.
\end{lem}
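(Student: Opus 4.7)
The plan is to reduce to the $\spl[2]$ side via Scheunert's recolouration bijection and then locate each ungraded irreducible $\spl[2]$-module $V_\lambda$ as an ungraded submodule of one of the $\Ztzt$-graded $\spl[2]$-modules used to build \Cref{thm:slc2Ztztirreps}. Recolouration preserves submodule structure, so this is enough. The finite-dimensional ungraded irreducible $\spl[2]$-modules are precisely the $V_\lambda$ for $\lambda\in\ZZ_{\geq 0}$, so Scheunert's bijection ensures that any finite-dimensional ungraded irreducible $\splc[2]$-module is the recolouration of some $V_\lambda$.

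For $\lambda$ even the inclusion is immediate: the $\Ztzt$-graded module $V_\lambda^{\mathrm{E}+}$ has underlying vector space $V_\lambda$ and its ungraded $\spl[2]$-action is the standard one; after recolouration its ungraded structure is exactly the target ungraded $\splc[2]$-irreducible.

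For $\lambda$ odd I would apply \Cref{thm:VHdirectsum} to the loop module $\loopmodule{V_\lambda^{\mathrm{E}}}{H_2}$. This gives a $\Gamma/H_2$-graded decomposition
\[
	\loopmodule{V_\lambda^{\mathrm{E}}}{H_2} \iso \bigoplus_{f\in\pdualtwist} (V_\lambda^{\mathrm{E}})^f,
\]
with $\card{\pdualtwist} = \card{H_2} = 2$. Each summand $(V_\lambda^{\mathrm{E}})^f$ is in particular an ungraded submodule. Because $\rho^f$ differs from $\rho$ only by scalars, $\rho^f$ and $\rho$ share the same invariant subspaces (cf.\ the proof of \Cref{prop:twistproperties}~\ref{prop:V-irred}), so each $(V_\lambda^{\mathrm{E}})^f$ is an ungraded irreducible $\spl[2]$-module. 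After recolouration this supplies the desired ungraded $\splc[2]$-submodule of the $\Ztzt$-graded irreducible $(\loopmodule{V_\lambda^{\mathrm{E}}}{H_2})^\sigma$.

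The main subtlety I expect is verifying that the two summands above realise precisely the ungraded $\splc[2]$-irreducible of highest weight $\lambda$, rather than producing some other irreducible. Using the change of basis \eqref{eq:splc2ahef}, the non-trivial character twist on $\spl[2]$ is seen to precompose the action by the Chevalley-like involution $e\mapsto -f,\,f\mapsto -e,\,h\mapsto -h$, under which every finite-dimensional irreducible $\spl[2]$-module is self-conjugate. Hence both summands are isomorphic to $V_\lambda$ as ungraded $\spl[2]$-modules, matching Scheunert's enumeration of ungraded $\splc[2]$-irreducibles and establishing the lemma.
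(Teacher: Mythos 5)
There is a genuine gap, and it lies in the very first step: you invoke ``Scheunert's recolouration bijection'' to conclude that every finite-dimensional ungraded irreducible \( \splc[2] \)-module is the recolouration of some \( V_\lambda \). No such bijection exists for \emph{ungraded} modules. Discolouration/recolouration of a representation is only defined for \( \Gamma \)-graded representations, because the twisted action \( \rho^{\sigma}(x)v = \sigma(\alpha,\gamma)\rho(x)v \) requires \( v \) to be homogeneous of known degree \( \gamma \); an ungraded \( \splc[2] \)-module cannot be ``discoloured'' into an \( \spl[2] \)-module at all. Indeed, the paper's eventual classification (\Cref{thm:slc2ungradedirreps}) shows the premise is false: for odd \( \lambda \) the ungraded irreducible \( \splc[2] \)-modules are the four modules \( U_{\lambda}^{\zeta\xi} \) of dimension \( (\lambda+1)/2 \), which are not of the form ``\( V_\lambda \) with a recoloured action.'' For the same reason your treatment of the odd case breaks down: each \( \Gamma/H_2 \)-graded summand \( (V_{\lambda}^{\mathrm{E}})^f \) of \( \loopmodule{V_{\lambda}^{\mathrm{E}}}{H_2} \) is indeed ungraded irreducible \emph{as an \( \spl[2] \)-module}, but after recolouration the \( \splc[2] \)-action on that \( (\lambda+1) \)-dimensional subspace decomposes further into two \( (\lambda+1)/2 \)-dimensional ungraded irreducibles, so you are exhibiting the wrong submodules. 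There is also a circularity issue: the lemma is precisely the tool the paper uses to \emph{find} the ungraded \( \splc[2] \)-irreducibles, so its proof cannot presuppose their classification.

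The paper's proof runs in the opposite direction and never needs to know what the ungraded irreducibles are. Let \( U \) be an arbitrary finite-dimensional ungraded irreducible \( \splc[2] \)-module. Applying the bijection \( \mathcal{F} \) of \Cref{thm:bijection} twice (along the chain \( \Ztzt \rhd \{00,11\} \rhd \{0\} \)) produces a \( \Ztzt \)-graded irreducible module; at each stage the module either acquires a grading on the same underlying space or is replaced by its loop module, and by \Cref{thm:VHdirectsum} the loop module contains the input as an ungraded submodule (the summand for the trivial character). Hence \( U \) sits inside the resulting \( \Ztzt \)-graded irreducible as an ungraded submodule, and since \Cref{thm:slc2Ztztirreps} already lists \emph{all} \( \Ztzt \)-graded irreducibles, that module is one of those listed. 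If you want to salvage your approach, you would need to replace the appeal to recolouration of ungraded modules with this forward iterated-loop-module argument.
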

\begin{proof}
	If we knew the ungraded irreducible representations for \( \splc[2] \),
	then we could use the same process in the preceding sections to derive the same \( \Ztzt \)-graded irreducible modules.
	By applying \Cref{thm:bijection} twice, these \( \Ztzt \)-graded irreducible modules are either irreducible as ungraded modules, a loop module or a loop module of a loop module.
	In any case, the original ungraded irreducible modules must appear as ungraded submodules of the \( \Ztzt \)-graded modules.
\end{proof}

Note that \( \Ztzt \)-graded modules which are equivalent up to \( H \)-parity shift (for some \( H\leq \Ztwo \)) are isomorphic as ungraded modules.

\begin{lem}\label{lem:VlE+irrep}
	If \( \lambda \) is even, then \( (V_{\lambda}^{E+})^{\sigma} \) is irreducible as an ungraded module.
\end{lem}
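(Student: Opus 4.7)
\emph{Plan.} Let \( W \) be a nonzero ungraded \( \splc[2] \)-submodule of \( (V_{\lambda}^{\mathrm{E}+})^{\sigma} \); the aim is to show \( W \) is the whole module. As a vector space, decompose \( (V_{\lambda}^{\mathrm{E}+})^{\sigma} = \bigoplus_{j=0}^{\lambda/2} U_j \) with \( U_j := \spn\{v_j,\,v_{\lambda-j}\} \) (two-dimensional when \( j<\lambda/2 \), one-dimensional when \( j=\lambda/2 \), since \( \lambda \) is even). This decomposition is adapted to \( \rho(a_3)^2 \): a direct computation from the formulas of \Cref{thm:slc2Ztztirreps}(i) yields \( \rho(a_3)^2(v_j\pm v_{\lambda-j}) = \tfrac{(\lambda-2j)^2}{4}(v_j\pm v_{\lambda-j}) \), so \( \rho(a_3)^2 \) acts on \( U_j \) as the scalar \( (\lambda-2j)^2/4 \), and these \( \lambda/2+1 \) eigenvalues are pairwise distinct.

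The first step is to locate a single block inside \( W \). Lagrange interpolation realises the projection onto each \( U_j \) as a polynomial in \( \rho(a_3)^2 \), so invariance of \( W \) under \( \rho(a_3)^2 \) gives \( W = \bigoplus_j(W\cap U_j) \). Fix \( j_0 \) with \( W\cap U_{j_0}\neq 0 \) and pick a nonzero \( w\in W\cap U_{j_0} \). On \( U_{j_0} \) with \( j_0<\lambda/2 \), the operator \( \rho(a_3) \) is anti-diagonal in the basis \( \{v_{j_0}+v_{\lambda-j_0},\,v_{j_0}-v_{\lambda-j_0}\} \) with nonzero off-diagonal entries \( \pm\tfrac{i(\lambda-2j_0)}{2} \), hence invertible; so \( \{w,\,\rho(a_3)w\} \) spans \( U_{j_0} \) and \( U_{j_0}\subseteq W \) (the case \( j_0=\lambda/2 \) is automatic).

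The second step is to propagate between blocks: the formulas for \( \rho(a_1) \) show \( \rho(a_1)U_j \subseteq U_{j-1}\oplus U_{j+1} \) (with \( U_{-1}=0 \)), and a further Lagrange interpolation in \( \rho(a_3)^2 \) separates these summands. To conclude by induction on \( |j-j_0| \) it suffices to show that whenever \( U_j\subseteq W \), at least one of \( \rho(a_1)(v_j+v_{\lambda-j}) \) and \( \rho(a_1)(v_j-v_{\lambda-j}) \) projects nontrivially onto \( U_{j+1} \) (when \( j<\lambda/2 \)) and onto \( U_{j-1} \) (when \( j\geq 1 \)). Both combinations are available in \( W \) by Step 1, and once every \( U_j \) is recovered we obtain \( W = (V_{\lambda}^{\mathrm{E}+})^{\sigma} \), proving ungraded irreducibility.

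The main obstacle is this final nondegeneracy check. The scalar coefficients \( \lambda-j+1 \) and \( j+1 \) in the formula for \( \rho(a_1) \) are never zero, but the vectors \( v_k\pm v_{\lambda-k} \) collapse when \( k=\lambda/2 \); for instance the \( U_{j+1} \)-component of \( \rho(a_1)(v_j+v_{\lambda-j}) \) vanishes exactly when \( j+1=\lambda/2 \). In every such degenerate configuration the opposite \( \pm \)-combination still contributes a nonzero multiple of \( v_{\lambda/2} \), which is why access to \emph{both} combinations---guaranteed by Step 1---is indispensable; an analogous case analysis settles the downward step.
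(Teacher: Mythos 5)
There is a genuine gap in your Step 1, and it sits exactly where the real content of the lemma lies. From the correct observation that \( \rho(a_3)^2 \) acts on \( U_{j_0}=\spn\{v_{j_0},v_{\lambda-j_0}\} \) as a scalar and that \( \rho(a_3)|_{U_{j_0}} \) is invertible, you conclude that \( \{w,\rho(a_3)w\} \) spans \( U_{j_0} \) for \emph{any} nonzero \( w\in W\cap U_{j_0} \). Invertibility does not give cyclicity: the anti-diagonal matrix \( \bigl(\begin{smallmatrix}0 & c\\ -c & 0\end{smallmatrix}\bigr) \) with \( c=\tfrac{i(\lambda-2j_0)}{2} \) is diagonalisable over \( \CC \) with distinct eigenvalues \( \pm\tfrac{\lambda-2j_0}{2} \), and if \( w \) is one of its eigenvectors \( (v_{j_0}+v_{\lambda-j_0})\pm i(v_{j_0}-v_{\lambda-j_0}) \) then \( \rho(a_3)w \) is proportional to \( w \) and you recover only a line. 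Since \( W\cap U_{j_0} \) is \( \rho(a_3) \)-invariant, the a priori possibilities are exactly: all of \( U_{j_0} \), or the span of a single \( a_3 \)-eigenvector. Your argument silently excludes the second case, but that case is precisely the dangerous one --- any hypothetical proper submodule would decompose into \( a_3 \)-eigenvectors, so these are the vectors from which irreducibility must be proved. Because Step 2 explicitly relies on having \emph{both} combinations \( v_j\pm v_{\lambda-j} \) in \( W \) (``guaranteed by Step 1''), the propagation step inherits the gap.

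The repair requires doing the raising/lowering analysis starting from a single \( a_3 \)-eigenvector, which is what the paper's proof does: it takes the eigenbasis \( u_j=(v_j+v_{\lambda-j})+i(-1)^j(v_j-v_{\lambda-j}) \) of \( a_3 \) from the outset, computes that \( a_2+a_1 \) and \( a_2-a_1 \) act as (parity-alternating) raising and lowering operators sending \( u_j \) to nonzero multiples of \( u_{j\mp1} \), and shows that alternating them drives any nonzero vector to a multiple of \( u_0 \) or \( u_\lambda \), from which the whole module is generated. Your block decomposition under \( \rho(a_3)^2 \) and the Lagrange-interpolation separation of the \( U_{j\pm1} \)-components are sound and could be kept, but you would still need to show that a lone eigenvector \( u_{j_0}\in W \) forces more of the module into \( W \) --- at which point you are essentially reconstructing the paper's ladder argument in the \( u_j \) basis.
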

\begin{proof}
	Choose a new basis \( \{u_j\mid j=0,\ldots,\lambda\} \) 
	where \( u_j = (v_j + v_{\lambda-j}) + i(-1)^j(v_j-v_{\lambda-j}) \).
	Note that this is not linearly dependent since \( u_{\lambda-j} = (v_j + v_{\lambda-j}) - i(-1)^j(v_j-v_{\lambda-j}) \).
	The action of \( \splc[2] \) is then
	\begin{equation}\label{eq:ungradedVlE+}
		\begin{aligned}
			a_1u_j &= \frac{(-1)^{j+1}}{2}((\lambda-j+1) u_{j-1} - (j+1) u_{j+1})\\
			a_2u_j &= -\frac{1}{2}((\lambda-j+1) u_{j-1} + (j+1) u_{j+1})\\
			a_3u_j &= \frac{(-1)^{j+1}}{2}(\lambda-2j) u_j
		\end{aligned}
	\end{equation}
	with the convention that \( u_{-1} = u_{\lambda+1} = 0 \).
	In particular, 
	\begin{align*}
		(a_2 + a_1)u_j &= 
		\begin{cases}
			-(\lambda-j+1) u_{j-1} & \text{if \( j \) even} \\
			-(j+1) u_{j+1} & \text{if \( j \) odd}.
		\end{cases} \\
		(a_2 - a_1)u_j &= 
		\begin{cases}
			-(j+1) u_{j+1} & \text{if \( j \) even}\\
			-(\lambda-j+1) u_{j-1} & \text{if \( j \) odd}. \\
		\end{cases}
	\end{align*}

	So, if we take an arbitrary element \( \sum_{j} c_j u_j \) of \( (V_{\lambda}^{E+})^{\sigma} \),
	then we can alternate applying \( (a_2+a_1) \) and \( (a_2-a_1) \) 
	(i.e.\ we apply \( ((a_2-a_1)(a_2+a_1))^k \) or \( (a_2-a_1)((a_2+a_1)(a_2-a_1))^k \) for some non-negative integer \( k \))
	to raise/lower the basis elements of \( \sum_j c_j u_j \).
	Eventually, this procedure will yield a scalar multiple of either \( u_0 \) or \( u_{\lambda} \),
	from which we can generate the entire module.
	Since the entire module can be generated from a single vector,
	\( (V_{\lambda}^{\mathrm{E}+})^{\sigma} \) is irreducible.
\end{proof}

Now we consider the case when \( \lambda \) is odd.
Then \( (\loopmodule{V_{\lambda}^{\mathrm{E}}}{H_2})^{\sigma} = U_{\lambda}^{++} \oplus U_{\lambda}^{+-} \oplus U_{\lambda}^{-+} \oplus U_{\lambda}^{--} \) where
\begin{align*}
	U_{\lambda}^{\zeta\xi} &= \spn\{u^{\zeta\xi}_j \mid j=0,1,\ldots, (\lambda-1)/2\} \\
	u^{\zeta\xi}_j &= v_{0,j} + \zeta i (-1)^j v_{1,j} + \xi v_{0,\lambda-j} - \zeta\xi i (-1)^j v_{1,\lambda-j}
\end{align*}
for \( \zeta,\xi\in\{-1,+1\} \).
The action of \( \splc[2] \) is given by
\begin{equation}\label{eq:ungradedUzj}
	\begin{aligned}
		a_1 u^{\zeta\xi}_j & = - \frac{\zeta}{2}(-1)^j((\lambda-j+1) u^{\zeta\xi}_{j-1} - (j+1)u^{\zeta\xi}_{j+1}) && \left(j< \frac{\lambda-1}{2}\right) \\
		a_1 u^{\zeta\xi}_{(\lambda-1)/2} & = - \frac{\zeta}{2}(-1)^{(\lambda-1)/2}\left(\frac{\lambda+3}{2} u^{\zeta\xi}_{(\lambda-3)/2} - \xi\frac{\lambda+1}{2}u^{\zeta\xi}_{(\lambda-1)/2}\right) \\
		a_2 u^{\zeta\xi}_j &= -\frac{1}{2}((\lambda-j+1) u^{\zeta\xi}_{j-1} + (j+1)u^{\zeta\xi}_{j+1}) && \left(j<\frac{\lambda-1}{2}\right)\\
		a_2 u^{\zeta\xi}_{(\lambda-1)/2} &= -\frac{1}{2}\left(\frac{\lambda+3}{2} u^{\zeta\xi}_{(\lambda-3)/2} + \xi \frac{\lambda+1}{2}u^{\zeta\xi}_{(\lambda-1)/2}\right)\\
		a_3 u^{\zeta\xi}_j &= - \frac{\zeta}{2}(-1)^j(\lambda-2j) u^{\zeta\xi}_j
	\end{aligned}
\end{equation}
with the convention \( u_{-1} = 1 \).
(Note that \( u^{\zeta\xi}_j \) was chosen so that it was an eigenvector of \( a_3 \).)
Clearly \( U_{\lambda}^{\zeta\xi} \) gives rise to four non-isomorphic modules,
since the action of \( a_1 \), \( a_2 \) and \( a_3 \) on \( u^{\zeta\xi}_{(\lambda-1)/2} \)
(the unique eigenvector for \( a_3 \) which has eigenvalue with absolute value \( 1/2 \)) is different for different values of \( \zeta,\xi \).

\begin{lem}
	\( U_{\lambda}^{\zeta\xi} \) is irreducible as an ungraded module.
\end{lem}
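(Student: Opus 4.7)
The plan is to mirror the approach of \Cref{lem:VlE+irrep}: exhibit raising and lowering operators on the basis $\{u_j^{\zeta\xi}\}$ built from linear combinations of $a_1$ and $a_2$, and then show that any nonzero submodule $N$ must contain at least one basis element (and hence, via the ladder operators, all of them).

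First, I would check that $a_3$ acts diagonally on $\{u_j^{\zeta\xi}\}$ with distinct eigenvalues $-\frac{\zeta}{2}(-1)^j(\lambda-2j)$ as $j$ ranges over $\{0,1,\ldots,(\lambda-1)/2\}$. If two such eigenvalues coincide, matching signs forces either $j=k$ (when they share parity) or $j+k=\lambda$ (when they do not), and the latter is impossible since $j+k \leq \lambda - 1$. Because any submodule $N$ is $a_3$-invariant, applying a suitable polynomial in $a_3$ (Lagrange interpolation at the eigenvalues) projects any nonzero element of $N$ onto a single eigenspace, placing some $u_j^{\zeta\xi}$ in $N$.

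For the ladder operators, direct substitution into the formulas for $a_1$ and $a_2$ should yield, for $0 \leq j < (\lambda-1)/2$,
\[
	(a_2 + \zeta(-1)^j a_1)\,u_j^{\zeta\xi} = -(\lambda - j + 1)\,u_{j-1}^{\zeta\xi}, \qquad (a_2 - \zeta(-1)^j a_1)\,u_j^{\zeta\xi} = -(j+1)\,u_{j+1}^{\zeta\xi},
\]
with the convention $u_{-1}^{\zeta\xi} = 0$; note that the sign in front of $a_1$ must be adjusted according to the parity of $j$ at each step. Iterating these ladder operators starting from any $u_j^{\zeta\xi} \in N$ produces every other $u_k^{\zeta\xi}$ in $N$, giving $N = U_\lambda^{\zeta\xi}$.

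The main obstacle is handling the boundary $j = (\lambda-1)/2$, where the formulas acquire an anomalous $\xi$-dependent term on $u_{(\lambda-1)/2}^{\zeta\xi}$ and there is no $u_{(\lambda+1)/2}^{\zeta\xi}$ to step up into. I would compute $(a_2 + \zeta(-1)^{(\lambda-1)/2} a_1)\,u_{(\lambda-1)/2}^{\zeta\xi}$ directly and verify that the anomalous $u_{(\lambda-1)/2}^{\zeta\xi}$ contributions cancel, leaving a nonzero multiple of $u_{(\lambda-3)/2}^{\zeta\xi}$. This ensures that we can always descend from the top of the ladder, so the same argument as above goes through and irreducibility follows.
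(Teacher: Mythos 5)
Your proof is correct, and its second half (the ladder operators) is essentially the paper's argument: the operators $a_2 \pm \zeta(-1)^j a_1$ you write down are exactly the paper's alternating applications of $a_2+a_1$ and $a_2-a_1$ (the case split on $\zeta(-1)^j$ in the paper plays the role of your $j$-dependent sign), and the cancellation of the anomalous $\xi$-term at $j=(\lambda-1)/2$ does work out as you predict, yielding $-\tfrac{\lambda+3}{2}u^{\zeta\xi}_{(\lambda-3)/2}$. Where you genuinely diverge is the first step. The paper starts from an arbitrary element $\sum_j c_j u_j$ of a submodule and claims that alternating $a_2+a_1$ and $a_2-a_1$ eventually produces a scalar multiple of $u_0$; making this fully rigorous requires tracking how the support of the linear combination evolves under operators that simultaneously raise some terms and lower others. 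You instead observe that $a_3$ acts diagonally on $U_{\lambda}^{\zeta\xi}$ with pairwise distinct eigenvalues $-\tfrac{\zeta}{2}(-1)^j(\lambda-2j)$ (distinctness is immediate since $\lambda-2j$ takes distinct positive values for $j\leq(\lambda-1)/2$, so even your parity case analysis can be shortened), and use Lagrange interpolation in $a_3$ to project any nonzero element of a submodule onto a single basis vector before laddering. This spectral step is cleaner and more robust than the paper's combinatorial claim, at the modest cost of invoking polynomial functional calculus; it also reuses the fact, already noted in the paper, that the $u^{\zeta\xi}_j$ were chosen to be $a_3$-eigenvectors. Both routes are valid proofs of irreducibility.
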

\begin{proof}
	Observe that, for \( j< (\lambda-1)/2 \),
	\begin{align*}
		(a_2+a_1)u^{\zeta\xi}_j &= 
		\begin{cases}
			-(\lambda-j+1)u^{\zeta\xi}_{j-1} & \text{if}~\zeta(-1)^j = 1\\
			-(j+1)u^{\zeta\xi}_{j+1} & \text{if}~\zeta(-1)^j = -1
		\end{cases} \\
		(a_2+a_1)u^{\zeta\xi}_{(\lambda-1)/2} &= 
		\begin{cases}
			-\frac{\lambda+3}{2}u^{\zeta\xi}_{(\lambda-3)/2} & \text{if}~\zeta(-1)^{(\lambda-1)/2} = 1\\
			-\xi\frac{\lambda+1}{2}u^{\zeta\xi}_{(\lambda-1)/2} & \text{if}~\zeta(-1)^{(\lambda-1)/2} = -1
		\end{cases} \\
		(a_2-a_1)u^{\zeta\xi}_j &= 
		\begin{cases}
			-(j+1)u^{\zeta\xi}_{j+1} & \text{if}~\zeta(-1)^j = 1\\
			-(\lambda-j+1)u^{\zeta\xi}_{j-1} & \text{if}~\zeta(-1)^j = -1
		\end{cases} \\
		(a_2-a_1)u^{\zeta\xi}_{(\lambda-1)/2} &= 
		\begin{cases}
			-\xi\frac{\lambda+1}{2}u^{\zeta\xi}_{(\lambda-1)/2} & \text{if}~\zeta(-1)^{(\lambda-1)/2} = 1 \\
			-\frac{\lambda+3}{2}u^{\zeta\xi}_{(\lambda-3)/2} & \text{if}~\zeta(-1)^{(\lambda-1)/2} = -1
		\end{cases}
	\end{align*}
	Take an arbitrary element \( \sum_j c_j u_j \) of \( U_{\lambda}^{\zeta\xi} \).
	Similar to the proof of \Cref{lem:VlE+irrep} we alternate applying \( a_2+a_1 \) and \( a_2-a_1 \).
	In this case, we are able to obtain a scalar multiple of \( u_0 \) from \( \sum_j c_j u_j \),
	from which we can generate the entire module.
\end{proof}

In summary, we have proven the following theorem:
\begin{thm}\label{thm:slc2ungradedirreps}
	The only finite-dimensional ungraded irreducible modules for \( \splc[2] \) are 
	\begin{enumerate}[label=(\roman*)]
		\item For each even number \( \lambda \), the module \( (V_{\lambda}^{\mathrm{E}+})^{\sigma} \)
			with action given in \eqref{eq:ungradedVlE+}. The dimension of these modules is \( \lambda+1 \).
		\item For each odd number \( \lambda \) and each \( \zeta,\xi\in\{-1,+1\} \), the module \( U_{\lambda}^{\zeta\xi} \) with action given in \eqref{eq:ungradedUzj}. The dimension of these modules is \( (\lambda+1)/2 \).
	\end{enumerate}
\end{thm}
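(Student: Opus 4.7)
The plan is to assemble the three preceding lemmas of this subsection. The first lemma already reduces the problem to analysing ungraded submodules of the $\Ztzt$-graded irreducibles listed in \Cref{thm:slc2Ztztirreps}. Since $H$-parity shifts yield isomorphic ungraded modules (as remarked just before \Cref{lem:VlE+irrep}), the four $\Ztzt$-graded irreducibles in part (i) of that theorem are mutually isomorphic as ungraded modules, and likewise the two in part (ii); thus only one representative from each family needs to be considered.

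For even $\lambda$, \Cref{lem:VlE+irrep} directly shows that $(V_\lambda^{\mathrm{E}+})^\sigma$ is ungraded irreducible, so it is the unique ungraded irreducible of dimension $\lambda+1$ arising from the even family. For odd $\lambda$, I would verify from the explicit formulas \eqref{eq:ungradedUzj} that each $U_\lambda^{\zeta\xi}$ is closed under the action of $a_1$, $a_2$, $a_3$, so the direct sum $(\loopmodule{V_\lambda^{\mathrm{E}}}{H_2})^\sigma = U_\lambda^{++} \oplus U_\lambda^{+-} \oplus U_\lambda^{-+} \oplus U_\lambda^{--}$ is a valid ungraded decomposition into submodules. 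The preceding lemma then establishes ungraded irreducibility of each $U_\lambda^{\zeta\xi}$. Pairwise non-isomorphism, as flagged in the paragraph above the theorem, follows from the action of $a_1$, $a_2$, $a_3$ on $u^{\zeta\xi}_{(\lambda-1)/2}$---the unique (up to scalar) $a_3$-eigenvector with eigenvalue of absolute value $1/2$---whose coefficients distinguish the four modules. The dimension $(\lambda+1)/2$ is the basis count.

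To conclude that no further ungraded irreducibles exist, any ungraded irreducible submodule of $(\loopmodule{V_\lambda^{\mathrm{E}}}{H_2})^\sigma$ must be isomorphic to one of the $U_\lambda^{\zeta\xi}$: by \Cref{prop:sumsimpledirect} the total module is completely reducible, and any irreducible submodule must coincide with one of the irreducible summands by Schur's lemma applied to the projections. I do not foresee any substantial obstacle; the mathematical content is already supplied by the preceding lemmas, and the remaining work is organisational---confirming submodule closure from the explicit formulas and invoking the non-isomorphism invariant for the odd-$\lambda$ case.
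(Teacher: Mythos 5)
Your proposal is correct and assembles the proof in essentially the same way the paper does: reduce to ungraded submodules of the graded irreducibles via the first lemma of the subsection, dispose of the even case with \Cref{lem:VlE+irrep}, and in the odd case use the explicit decomposition into the $U_\lambda^{\zeta\xi}$, their irreducibility, and the non-isomorphism criterion at $u^{\zeta\xi}_{(\lambda-1)/2}$. The only caveat is cosmetic: an irreducible ungraded submodule of a completely reducible module need not \emph{coincide} with one of the chosen summands (it may sit diagonally), but Schur's lemma applied to the projections does show it is \emph{isomorphic} to one of them, which is all the classification requires.
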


This theorem agrees with the previous result of~\cite[Theorem~5.2]{CSVO2006},
but is presented in the context of the theory in this paper.

\begin{rmk}
	It is clear that the modules \( U_{\lambda}^{+\xi} \) and \( U_{\lambda}^{-\xi} \) are twists of each other, with twisting character \( f\in\pdualtwist \) given by \( f(00)=f(01)=1 \) and \( f(10)=f(11)=-1 \) (here we can choose \( \Gamma = \Ztzt/\{00,01\} \) and \( H = \Gamma \)).
	If we instead choose 
	\( f(00)=f(11)=1 \) and \( f(01)=f(10)=-1 \)
	(with \( \Gamma = \Ztzt/\{00,11\},\, H = \Gamma \)),
	then we find that the twist of \( U_{\lambda}^{\zeta+} \) by \( f \) is isomorphic to \( U_{\lambda}^{\zeta-} \).
\end{rmk}

The above remark shows that \( U^{\zeta\xi} \) are all equivalent up to twist by \( \pdualtwist \) for different values of \( \zeta,\xi \).
Consequently, there is a unique equivalence class in \Cref{thm:slc2ungradedirreps} for each \( \lambda\in\ZZ_{\geq0} \).
This corresponds exactly to the equivalence classes for \( \Ztzt \)-graded \( \splc[2] \)-modules in \Cref{thm:slc2Ztztirreps},
again demonstrating the bijection in \Cref{thm:bijection}.

\begin{rmk}
	Although the ungraded \( \splc[2] \) representations were derived from the ungraded \( \spl[2] \) representations, they have a remarkably different structure. This shows that, despite the procedure outlined in this paper, colour algebras still have an interesting representation theory.
\end{rmk}

\section{Conclusions}\label{sec:conclusion}
The loop module is a natural structure to consider in the study of Lie colour algebras,
including their applications to quantum mechanical systems.
In this direction, we showed that the Hilbert space of the \( \Ztzt \)-graded quantum mechanical system of~\cite{BD2020a} can be obtained via a loop module.
The loop module could potentially be a useful tool for constructing quantum mechanical systems in future work.

The main result of this paper (\Cref{thm:bijection}) is the existence of a bijection between finite-dimensional \( \Gamma \)- and \( \Gamma/H \)-graded irreducible representations (up to equivalence relations stronger than isomorphism).
This bijection makes use of iterated loop modules following the simple groups in the Jordan--H\"older decomposition.
A similar bijection had previously been noted in~\cite[Remark~7.2]{EK2017},
but the iterated loop module by simple groups simplifies the construction.
In combination with discolouration/recolouration, this bijection provides a general procedure for generating all of the \( \Gamma \)-graded Lie colour algebra representations from the well-studied \( \Ztwo \)-graded Lie superalgebra representations.


As an example of applying the bijection, we constructed all of the irreducible representations for the Lie colour algebra \( \splc[2] \).
These irreducible representations agree with those previously derived in \cite{CSVO2006,BS2022} using a different construction.
Despite the bijection of \Cref{thm:bijection},
these \( \Ztzt \)-graded Lie colour algebra representations have a remarkably different structure from those of the ungraded discoloured Lie algebra.

The bijection (\Cref{thm:bijection}) also has applications to the study of Lie colour algebra quantum mechanical systems.
For example, we could potentially apply this procedure to construct the irreducible representations of \( \Ztzt \)-supersymmetry algebras with \( \mathcal{N}\geq 2 \), whose classification is desired for the quantum mechanical systems studied in~\cite{AD2022}.
Future research should work to obtain a general classification of irreducible Lie colour algebra representations, possibly by applying the bijection of this paper to existing classifications.


\subsection*{Acknowledgements}
The author would like to thank Kenneth Price for making him aware of~\cite{Price2024},
and an anonymous reviewer for directing him to the loop module construction of~\cite{MZ2018}.
The author is especially grateful to Phil Isaac for his invaluable guidance, and for taking the time to suggest improvements to the manuscript.



\end{document}